\providecommand{\tabularnewline}{\\}
\numberwithin{equation}{section}
\numberwithin{figure}{section}
\theoremstyle{plain}
\newtheorem{thm}{Theorem}
  \theoremstyle{definition}
  \newtheorem{defn}[thm]{Definition}
  \theoremstyle{remark}
  \newtheorem{rem}[thm]{Remark}
  \theoremstyle{definition}
  \newtheorem{problem}[thm]{Problem}
  \theoremstyle{plain}
  \newtheorem{cor}[thm]{Corollary}
 \theoremstyle{definition}
  \newtheorem{example}[thm]{Example}
  \theoremstyle{plain}
  \newtheorem{algorithm}[thm]{Algorithm}
  \theoremstyle{plain}
  \newtheorem{lem}[thm]{Lemma}
  \theoremstyle{remark}
  \newtheorem{claim}[thm]{Claim}
  \theoremstyle{remark}
  \newtheorem{notation}[thm]{Notation}
  \theoremstyle{plain}
  \newtheorem{criterion}[thm]{Criterion}
\newcommand{\mathsym}[1]{{}}
\begin{document}

\title{Asymptotic Solutions of Polynomial Equations with Exp-Log Coefficients}

\author{Adam Strzebo\'nski}

\curraddr{Wolfram Research Inc., 100 Trade Centre Drive, Champaign, IL 61820,
U.S.A.}

\email{adams@wolfram.com}
\begin{abstract}
We present an algorithm for computing asymptotic approximations of
roots of polynomials with exp-log function coefficients. The real
and imaginary parts of the approximations are given as explicit exp-log
expressions. We provide a method for deciding which approximations
correspond to real roots. We report on implementation of the algorithm
and present empirical data.
\end{abstract}
\maketitle

\section{Introduction}
\begin{defn}
The set of \emph{exp-log functions} is the smallest set of partial
functions $\mathbb{\mathbb{R}\rightarrow R}$ containing $\exp$,
$\log$, the identity function and the constant functions, closed
under addition, multiplication and composition of functions. 
\end{defn}
The domain $D(f)$ of an exp-log function $f$ is determined as follows:
\begin{enumerate}
\item the domain of $exp$, the identity function and the constant functions
is $\mathbb{R}$ and the domain of $log$ is $\mathbb{R}_{+}$,
\item $D(f+g)=D(fg)=D(f)\cap D(g)$,
\item $D(f(g))=g^{-1}(D(f))$.
\end{enumerate}
In particular, $D(f)$ is an open set and $f$ is $C^{\infty}$ in
$D(f)$.
\begin{rem}
The multiplicative inverse function\[
\mathbb{R}\setminus\{0\}\ni x\rightarrow1/x=x\,\exp(-\log(x^{2}))\in\mathbb{R}\]
and the real exponent power functions\[
\mathbb{R}_{+}\ni x\rightarrow x^{r}=\exp(r\log(x))\in\mathbb{R}\]
for $r\in\mathbb{R}$, are exp-log functions.
\end{rem}
The domain of an exp-log function consists of a finite number of open,
possibly unbounded, intervals and an exp-log function has a finite
number of real roots. An algorithm computing domains and isolating
intervals for real roots of exp-log functions is given in \cite{S21,S22}. 

We say that a partial function $f:\mathbb{R}\rightarrow\mathbb{C}$
is \emph{defined near infinity} if $D(f)\supseteq(c,\infty)$ for
some $c\in\mathbb{R}$. 
\begin{defn}
A \emph{Hardy field} \cite{B} is a set of germs at infinity of real-valued
functions that is closed under differentiation and forms a field under
addition and multiplication.\end{defn}
\begin{thm}
\cite{H1,H2} The germs at infinity of exp-log functions defined near
infinity form a Hardy field. 
\end{thm}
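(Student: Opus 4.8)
The plan is to verify directly that the set $\mathcal H$ of germs at $+\infty$ of exp-log functions defined near infinity satisfies the two substantive clauses in the definition of a Hardy field — closure under differentiation, and being a field under $+$ and $\cdot$ — isolating as the one nontrivial ingredient the fact that a nonzero exp-log function cannot change sign arbitrarily far out.

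First I would dispose of the routine parts. \emph{Closure under differentiation.} A structural induction on the way $f$ is built from $\exp,\log,\mathrm{id}$ and the constants shows that $f'$ is again exp-log and that $D(f')\supseteq D(f)$: the atoms give $\exp'=\exp$, $\log'(x)=1/x$ (exp-log by the Remark), $\mathrm{id}'=1$, $c'=0$, and the inductive cases are the sum, product and chain rules, each of which carries exp-log functions to exp-log functions. Hence $\operatorname{germ}(f)\mapsto\operatorname{germ}(f')$ is a well-defined derivation on $\mathcal H$. \emph{Ring structure.} If $f$ and $g$ are defined on $(c,\infty)$ then so are $f+g$, $fg$ and $-f=(-1)\cdot f$, and these are exp-log by the closure assumptions in the definition; thus $\mathcal H$ is a subring of the ring of germs at $+\infty$, it contains the germs of all constants, and $1\neq 0$ in it.

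The crux — which I expect to be the main obstacle — is the following \emph{Key Lemma}: every exp-log function $f$ defined near infinity is either eventually identically $0$ or eventually nowhere $0$, and in both cases $\lim_{x\to\infty}f(x)$ exists in $\mathbb R\cup\{\pm\infty\}$. Granting it, $\mathcal H$ is a field: if $\operatorname{germ}(f)\neq 0$, choose $c$ with $f$ nowhere zero on $(c,\infty)$; then $f(x)^{2}>0$ there, so $h(x):=f(x)\exp\!\bigl(-\log(f(x)^{2})\bigr)$ is exp-log and defined on $(c,\infty)$, and since $\exp(-\log y)=1/y$ for $y>0$ we have $h=1/f$ on $(c,\infty)$, so $\operatorname{germ}(h)$ inverts $\operatorname{germ}(f)$. (The Key Lemma also forces $\mathcal H$ to have no zero divisors.) Together with the preceding paragraph, this makes $\mathcal H$ a Hardy field.

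To prove the Key Lemma — the ``no oscillation'' phenomenon, equivalent after localization at $+\infty$ to the cited finiteness of the real root set — I would induct on the construction of $f$, strengthening the statement so that the germs of all exp-log subterms of $f$ that are defined near infinity lie in one common Hardy field $L$; each such germ then automatically has a limit in $\mathbb R\cup\{\pm\infty\}$ and is eventually monotone, these being standard properties of elements of a Hardy field. The atoms and the operations $+,\cdot$ are immediate, and a composition $f_{1}\circ g$ reduces, via associativity together with $(f_{1}+f_{2})\circ g=(f_{1}\circ g)+(f_{2}\circ g)$, $(f_{1}f_{2})\circ g=(f_{1}\circ g)(f_{2}\circ g)$, $\mathrm{id}\circ g=g$ and $c\circ g=c$, to the two cases $\exp\circ g$ and $\log\circ g$; for these one invokes the classical lemmas that adjoining $\exp g$, or $\log g$ when $g$ is eventually positive, to a Hardy field again produces a Hardy field. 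The delicate point is $\exp\circ g$ and $\log\circ g$ when $g$ does not tend to $+\infty$: one uses that $g$ lies in the Hardy field already constructed, so it has a finite limit or tends to $-\infty$, normalizes accordingly, and reduces once more to the adjunction lemmas. Alternatively, and much more quickly at the cost of heavier input, one can observe that every exp-log function is definable with parameters in the o-minimal structure $\mathbb R_{\exp}$, whence its zero set is a finite union of points and intervals and the Key Lemma is immediate; but the inductive Hardy-field argument is the one behind the attribution \cite{H1,H2}.
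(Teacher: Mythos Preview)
The paper does not supply a proof of this theorem; it is stated with the citation \cite{H1,H2} and used as background. So there is nothing in the paper to compare your argument against line by line.

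Your sketch is essentially the classical Hardy--Bourbaki argument and is correct in outline. A few remarks on presentation. First, your reduction of general composition to the two atomic cases $\exp\circ g$ and $\log\circ g$ is really the observation that the class of exp-log functions has the equivalent description ``smallest set containing the constants and the identity, closed under $+$, $\cdot$, and under $f\mapsto\exp f$ and $f\mapsto\log f$''; once you say this, the structural induction is clean and you avoid the slightly awkward inner induction on the outer factor of a composition. Second, the ``classical lemmas'' you invoke---that a Hardy field $L$ can be extended to a Hardy field containing $\exp g$ (for any $g\in L$) and $\log g$ (for any eventually positive $g\in L$)---are exactly the substance of Hardy's original papers and are reproved in Bourbaki \cite{B} and Rosenlicht \cite{R6}; citing one of these makes the dependence explicit. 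Third, in the $\log$ step you should remark that if $g\in L$ is eventually negative then $\log g$ is simply not defined near infinity, so no extension is needed; and if $g$ is eventually zero then $\log g$ is again undefined. With these small clarifications your inductive invariant (``all subterms already built lie in a common Hardy field $L$'') goes through, and the Key Lemma, hence the field property, follows. Your o-minimality shortcut via $\mathbb{R}_{\exp}$ is also valid and, as you say, much faster, though anachronistic relative to the attribution.
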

Let $P(x,y)=a_{n}(x)y^{n}+\ldots+a_{0}(x)$ where, for $0\leq i\leq n$,
$a_{i}(x)=u_{i}(x)+\imath v_{i}(x)$ and $u_{i}$ and $v_{i}$ are
exp-log functions defined near infinity ($\imath$ denotes the imaginary
unit). 
\begin{problem}
Describe the asymptotic behaviour of roots of $P$ in $y$ as $x$
tends to infinity.
\end{problem}
The following theorem \cite{R5,R6} shows that the problem is well
posed, that is the roots of $P$ in $y$ are $C^{\infty}$ functions
in $x$ defined near infinity.
\begin{thm}
If $H$ is a Hardy field, then there exists a Hardy field $K\supseteq H$
such that $K[\imath]$ is algebraically closed.\end{thm}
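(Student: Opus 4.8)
The plan is to reduce the theorem to the Artin--Schreier characterization of real closed fields together with a single analytic lemma on real root branches. First note that every Hardy field $F$ is an ordered field: if $f\in F$ is nonzero then $1/f\in F$ is defined near infinity, so $f$ is eventually nonzero, and being $C^{\infty}$ near infinity (closure under differentiation) it is eventually of constant sign; calling $f$ positive when it is eventually positive makes $F$ ordered, so $-1$ is not a sum of squares and in particular $\sqrt{-1}\notin F$. Hence it suffices to produce a \emph{real closed} Hardy field $K\supseteq H$, for then $K[\imath]=K(\sqrt{-1})$ is algebraically closed. I would obtain such a $K$ from Zorn's lemma applied to the set of Hardy fields containing $H$, ordered by inclusion. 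The only thing to check is that the union of a chain of Hardy fields is a Hardy field, which is immediate: each element, and therefore each value of a field operation or of $d/dx$ on finitely many elements, already lies in a single member of the chain. Fix a maximal Hardy field $K\supseteq H$.

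It remains to show that a maximal Hardy field $K$ is real closed, i.e. (Artin--Schreier) that every positive $f\in K$ has a square root in $K$ and every odd-degree polynomial over $K$ has a root in $K$. Both will follow from the \emph{lemma}: if $Q\in K[y]$ is monic and irreducible and $Q(x,\cdot)$ has a real zero for arbitrarily large $x$, then $Q$ has a root in a Hardy field extension of $K$. Granting this, maximality forces that extension to equal $K$, so $Q$ has a root in $K$ and, being irreducible, is linear; consequently any monic polynomial over $K$ that has a real zero for arbitrarily large $x$ has a linear factor, hence a root, in $K$. Applied to a monic irreducible factor of odd degree of a given odd-degree polynomial over $K$ (at every large $x$ that factor is a monic real polynomial of odd degree, so it has a real zero there) and to $y^{2}-f$ with $f\in K$, $f>0$, this yields exactly the two required properties.

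To prove the lemma I would argue as follows. Since $Q$ is irreducible in characteristic $0$ it is separable, so $\mathrm{disc}_{y}(Q)$ is a nonzero element of $K$, hence nonvanishing on some $(c,\infty)$; there (as $Q$ is monic) $Q(x,\cdot)$ has only simple real zeros and $\partial_{y}Q\neq0$ at each of them. Choose $x_{0}>c$ large enough that $Q(x_{0},\cdot)$ has a real zero $y_{0}$. By the implicit function theorem the branch through $(x_{0},y_{0})$ extends to a maximal $C^{\infty}$ solution $\gamma$ on an open interval $I\ni x_{0}$ with $Q(x,\gamma(x))=0$; its right endpoint must be $+\infty$, since otherwise a Cauchy bound built from the (continuous) coefficients of $Q$ keeps $\gamma$ bounded as $x$ approaches that endpoint, producing there a limit point at which $\partial_{y}Q\neq0$ and hence, again by the implicit function theorem, an extension of $\gamma$, a contradiction. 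Thus $\gamma$ is a germ at infinity with $Q(x,\gamma(x))=0$; differentiating this identity gives $\gamma'=-\,\partial_{x}Q(x,\gamma)/\partial_{y}Q(x,\gamma)$ with eventually nonzero denominator, and since $Q$ is irreducible with $Q(\gamma)=0$ we have $K(\gamma)\cong K[y]/(Q)$, a differentiation-closed subfield of the ring of germs at infinity of $C^{\infty}$ functions, hence a Hardy field containing a root of $Q$.

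The step I expect to be the real obstacle is showing the continued branch $\gamma$ in the lemma reaches $+\infty$: one must simultaneously rule out the branch being blocked by a collision of two real roots as $x$ grows — this is precisely what forces us to work past the last zero of $\mathrm{disc}_{y}(Q)$, which exists because that discriminant is a nonzero element of the Hardy field $K$ and so is eventually of one sign — and rule out the branch escaping to $y=\pm\infty$ over a finite $x$-range, which is controlled by Cauchy's bound on roots together with continuity of the coefficients. Only after both are secured does the implicit function theorem deliver the $C^{\infty}$ dependence. Everything else is the standard Artin--Schreier criterion and the elementary observation that a subfield of the germs at infinity of $C^{\infty}$ functions closed under $d/dx$ is automatically a Hardy field.
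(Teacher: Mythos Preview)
The paper does not prove this theorem; it merely cites it from Robinson and Rosenlicht. Your argument is essentially the classical one from those references: take a maximal Hardy field above $H$ by Zorn, show it is real closed via the Artin--Schreier criterion, and supply the one analytic ingredient (a simple real root branch of an irreducible $Q\in K[y]$ extends to a $C^{\infty}$ germ at infinity, generating a Hardy field extension $K(\gamma)$) by combining the implicit function theorem with the eventual nonvanishing of $\mathrm{disc}_{y}(Q)\in K$ and a Cauchy bound.

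Your write-up is correct in outline; two small points are worth tightening. First, when you apply the lemma to $y^{2}-f$ you should note that if $y^{2}-f$ is already reducible over $K$ there is nothing to prove, so that the irreducibility hypothesis of the lemma is met in the remaining case. Second, in the continuation argument you pass from ``$\gamma$ is bounded near the putative finite right endpoint $b$'' to ``$\gamma$ extends past $b$'' via a limit point; to make this airtight, observe that on $(c,\infty)$ the real zeros of $Q(x,\cdot)$ are simple and hence, by the implicit function theorem, arrange themselves into finitely many disjoint $C^{\infty}$ branches defined on all of $(c,\infty)$; your $\gamma$ coincides with one of them, which already lives on $(c,\infty)$. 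This simultaneously handles the ``no collision'' and ``no blow-up'' issues you flagged and avoids any subsequence argument.
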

\begin{cor}
There exists a Hardy field $K$ such that $P$ has $n$ roots in $y$
(counted with multiplicities) in $K[\imath]$.\end{cor}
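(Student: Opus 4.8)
The plan is to deduce the corollary directly from the two theorems quoted immediately above it: exp-log functions defined near infinity furnish a Hardy field, and every Hardy field extends to one whose complexification is algebraically closed. Essentially all the content is in those two results, so the proof is mostly bookkeeping about passing to germs.

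First I would fix notation: let $H$ denote the set of germs at infinity of exp-log functions defined near infinity, which by the theorem of \cite{H1,H2} is a Hardy field. Each coefficient $a_i=u_i+\imath v_i$ of $P$ has $u_i,v_i$ exp-log functions defined near infinity, so their germs lie in $H$; hence $a_i$ determines an element of $H[\imath]$, and, after passing to germs, $P$ is a polynomial in $y$ with coefficients in $H[\imath]$.

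Next I would apply the theorem of \cite{R5,R6} to $H$ to obtain a Hardy field $K\supseteq H$ with $K[\imath]$ algebraically closed. Since $H[\imath]\subseteq K[\imath]$, we may regard $P\in K[\imath][y]$. At this point I would record two small verifications: (i) $K$, being a field of germs of real-valued functions, contains no square root of $-1$ (such an element would be eventually real with negative square), so $K[\imath]$ is a genuine quadratic field extension of $K$ in which $P$ lives; and (ii) the leading coefficient $a_n$, being an element of a field, is either zero or invertible, and in the nondegenerate case $a_n\neq 0$ the polynomial $P$ has degree exactly $n$ in $y$ over $K[\imath]$ (if $a_n$ is the zero germ the statement is read with the correspondingly smaller degree). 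Because $K[\imath]$ is algebraically closed, $P/a_n$ splits into $n$ monic linear factors over $K[\imath]$, exhibiting $n$ roots of $P$ in $y$, counted with multiplicity, as elements of $K[\imath]$.

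Finally, unwinding the identification of germs with functions, each such root is the germ at infinity of a $\mathbb{C}$-valued function whose real and imaginary parts are germs in $K$; since elements of a Hardy field are in particular $C^{\infty}$ near infinity, this also recovers the regularity assertion made in the text preceding the corollary. I do not expect a genuine obstacle in this argument: the only points requiring care are the degenerate case of a vanishing leading coefficient and the confirmation that $K[\imath]$ is a field, so that ``$n$ roots counted with multiplicities'' is meaningful; the substantive input — the existence of the algebraically closed complexified extension — is supplied wholesale by \cite{R5,R6}.
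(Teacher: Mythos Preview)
Your proposal is correct and matches the paper's approach: the corollary is stated without proof, as an immediate consequence of applying the Robinson--Rosenlicht theorem to the Hardy field of exp-log germs furnished by Hardy's theorem. Your additional remarks about $K[\imath]$ being a genuine quadratic extension and the handling of a possibly vanishing leading coefficient are more detail than the paper provides, but they are sound and do not deviate from the intended argument.
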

\begin{defn}
Let $f:\mathbb{R}\rightarrow\mathbb{C}$ be a partial function defined
near infinity. We say that partial functions $f_{1},\ldots,f_{m}:\mathbb{R}\rightarrow\mathbb{C}$
defined near infinity form an \emph{$m$-term asymptotic approximation}
of $f$ if, for $1\leq i<m$, $\lim_{x\rightarrow\infty}\frac{f_{i+1}(x)}{f_{i}(x)}=0$
and $\lim_{x\rightarrow\infty}\frac{f(x)-\sum_{i=1}^{m}f_{i}(x)}{f_{m}(x)}=0$. 

In this paper we present an algorithm which computes asymptotic approximations
of roots of $P$ in $y$. The approximations are given as exp-log
expressions. The algorithm makes use of the theory of {}``most rapidly
varying'' subexpressions developed in \cite{G} to compute limits
of exp-log functions. In fact our algorithm applies in the more general
case of \emph{MrvH} fields. The algorithm is based on a Newton polygon
technique \cite{N,W1,H3} extended to {}``series'' with arbitrary
real exponents. 

Algorithms given in \cite{Sh1,H3} solve the problem of finding asymptotic
solutions of polynomial equations in more general settings. We chose
to extend the algorithm of \cite{G} because we find it simpler to
implement and we can give a direct and elementary proof that the computed
expressions satisfy our (weaker) requirements.\end{defn}
\begin{example}
\label{exa:intro}Let $P(x,y)=y^{5}-\exp(x)y-\log(x)$. One-term asymptotic
approximations of roots of $P$ in $y$ computed with our algorithm
are $r_{1}=-\exp(-x)\log(x)$, $r_{2}=-\exp(-x)^{-1/4}$, $r_{3}=\exp(-x)^{-1/4}$,
$r_{4}=-\imath\exp(-x)^{-1/4}$, $r_{5}=\imath\exp(-x)^{-1/4}$. Let
us estimate the relative error $\varepsilon_{i}=\lvert\frac{r_{i}-r_{i}^{*}}{r_{i}}\rvert$
of the approximations, where $r_{i}^{*}$ is the exact root closest
to $r_{i}$. Using the bound \[
\lvert y-y^{*}\rvert\leq\lvert\frac{5P(x,y)}{\frac{\partial P}{\partial y}(x,y)}\rvert\]
on the distance from $y$ to the closest root $y^{*}$ of $P(x,y)$,
after simplifications valid for $x>1$, we get $\varepsilon_{1}\leq\frac{5\log(x)^{4}}{\exp(5x)-5\log(x)^{4}}$,
and for $i\neq1$, $\varepsilon_{i}\leq\frac{5}{4}\log(x)\exp(-\frac{5}{4}x)$.
Both bounds tend to zero as $x$ tends to infinity and are decreasing
for $x>\exp(W_{0}(\frac{4}{5}))=1.63\ldots$, where $W_{0}$ is the
principal branch of the Lambert W function. Evaluating the bounds
at $x=10$ we get $\log_{10}\varepsilon_{1}\leq-19.5$ and $\log_{10}\varepsilon_{i}\leq-4.96$
for $i\neq1$. For $x=1000$ we get $\log_{10}\varepsilon_{1}\leq-2167$
and $\log_{10}\varepsilon_{i}\leq-541$ for $i\neq1$. This shows
that we can obtain approximations of roots of $P(1000,y)$ to $500$
digits of precision by evaluating the asymptotic approximations. The
evaluation takes $2$ ms. For comparison, direct computation of roots
of $P(1000,y)$ to $500$ digits of precision takes $68$ ms. Figure
\ref{fig:intro} shows the asymptotic approximations of the real roots
of $P(x,y)$ (dashed curves) and the exact roots (solid green curves).

\begin{figure}

\caption{\label{fig:intro}Real roots from Example \ref{exa:intro}}
\includegraphics[width=\columnwidth, trim = 0mm 0mm 0mm 0mm, clip]{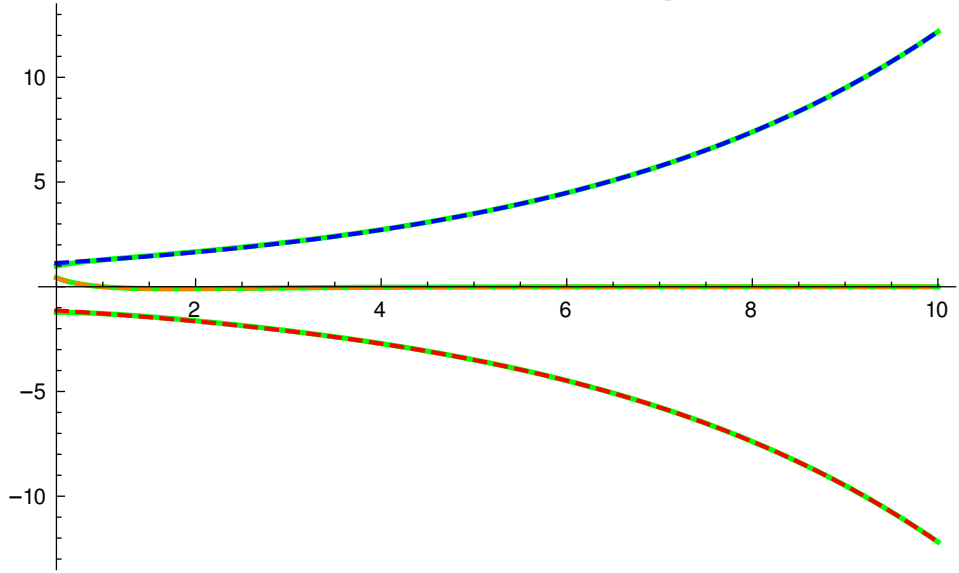}

\end{figure}

\end{example}

\section{Most rapidly varying subexpressions}

In this section we give a very brief summary of terminology and facts
necessary for formulating Algorithm \ref{alg:MrvApprox}. We will
use the algorithm to compute approximations of coefficients of $P$
in terms of a {}``most rapidly varying'' subexpression present in
the coefficients. The algorithm is based on the algorithm MrvLimit
described in \cite{G}. For a more detailed introduction and proofs
of the stated facts see \cite{G}. 
\begin{defn}
The set of \emph{exp-log expressions} with coefficients in a computable
field $C\subseteq\mathbb{R}$ is defined recursively as follows:
\begin{enumerate}
\item elements of $C$ and the variable $x$ are exp-log expressions,
\item if $f$ and $g$ are exp-log expressions, so are $f+g$, $f\cdot g$
and $\frac{f}{g}$,
\item if $f$ is an exp-log expression and $c\in C$, then $\exp(f)$, $\log(f)$,
and $f^{c}$ are exp-log expressions.
\end{enumerate}
\end{defn}
Each exp-log expression represents an exp-log function, however the
same function may be represented by many different expressions. In
the following, when we refer to the domain, point values, and limits
of an exp-log expression, we mean the domain, point values, and limits
of the corresponding exp-log function. 
\begin{defn}
Let $E_{\infty}(x)$ be the set of exp-log expressions $f$ such that,
for some $c\in\mathbb{R}$, $(c,\infty)\subseteq D(f)$ and either
$f=0$ as an expression or $f$ is nonzero on $(c,\infty)$.\end{defn}
\begin{rem}
Note that we exclude from $E_{\infty}(x)$ expressions that are identically
zero in a neighbourhood of infinity, but are not explicitly zero e.g.
$f=\exp(\log((x-c)^{2})/2)-x+c$. The algorithm \emph{ExpLogRootIsolation}
of \cite{S21} can be used to check whether a given exp-log expression
is defined near infinity and to detect expressions that are identically
zero in a neighbourhood of infinity and replace them with explicit
zeros. \emph{ExpLogRootIsolation} requires a zero test algorithm for
elementary constants. Termination of the currently known zero test
algorithm relies on Schanuel's conjecture \cite{R3,S22}.
\end{rem}
The germs at infinity of functions represented by elements of $E_{\infty}(x)$
form the Hardy field of exp-log functions defined near infinity.
\begin{thm}
If $f$ and $g$ are nonzero elements of a Hardy field, then the limit
\[
\lim_{x\rightarrow\infty}\frac{\log\lvert f(x)\rvert}{\log\lvert g(x)\rvert}\]
exists (in $\overline{\mathbb{R}}=\mathbb{R}\cup\{-\infty,\infty\}$).
Moreover, if $\lim_{x\rightarrow\infty}g(x)=0$ and\[
\lim_{x\rightarrow\infty}\frac{\log\lvert f(x)\rvert}{\log\lvert g(x)\rvert}=0\]
then, for any $e>0$, \[
\lim_{x\rightarrow\infty}f(x)g(x)^{e}=0\]

\end{thm}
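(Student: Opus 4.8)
The plan is to establish the two assertions separately, reducing each to the basic properties of Hardy fields \cite{B} --- a nonzero germ in a Hardy field is eventually of constant sign, eventually monotone, and hence has a limit in $\overline{\mathbb R}$ --- together with l'H\^opital's rule; no Hardy field extension theorem will be needed.

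For the existence of the limit, I would first note that $|f|$ and $|g|$ are unchanged when $f$ and $g$ are replaced by $-f$ and $-g$, and that a nonzero Hardy field element has eventually constant sign, so one may assume $f>0$ and $g>0$ near infinity and work with the germs $u:=\log f$ and $v:=\log g$. These are eventually $C^{1}$, their derivatives $u'=f'/f$ and $v'=g'/g$ lie in $H$, and $\alpha:=\lim_{x\to\infty}u(x)$ and $\beta:=\lim_{x\to\infty}v(x)$ exist in $\overline{\mathbb R}$ because $f$ and $g$ are eventually monotone. The one nontrivial input is the observation that, whenever $g'\not\equiv0$ near infinity,
\[
\frac{u'}{v'}=\frac{f'/f}{g'/g}=\frac{f'g}{fg'}\in H ,
\]
so that $\lim_{x\to\infty}u'(x)/v'(x)$ exists in $\overline{\mathbb R}$. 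I would then split into cases on $\beta$: if $\beta$ is finite and nonzero, then $u/v\to\alpha/\beta$; if $\beta=\pm\infty$, the version of l'H\^opital's rule with denominator tending to $\pm\infty$ (no assumption on the numerator) gives $\lim u/v=\lim u'/v'$; if $\beta=0$ and $\alpha\neq0$, then $u/v$ has eventually constant sign and unbounded modulus, so the limit is $+\infty$ or $-\infty$; and if $\beta=\alpha=0$, then $v$ is not eventually constant (otherwise $v\equiv0$), so the $0/0$ version of l'H\^opital's rule applies and again $\lim u/v=\lim u'/v'$. The degenerate case in which $v\equiv0$ near infinity, i.e.\ $|g|\equiv1$, makes the quotient undefined and is tacitly excluded.

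For the second assertion it suffices to prove $\log|f(x)|+e\log|g(x)|\to-\infty$, since then $|f(x)g(x)^{e}|=\exp\bigl(\log|f(x)|+e\log|g(x)|\bigr)\to0$. Writing $u=\log|f|$ and $v=\log|g|$, the hypotheses give $g\to0$ with $g\neq0$ near infinity, hence $|g|\to0^{+}$ and $v\to-\infty$ (in particular $v$ is eventually nonzero, so $u/v$ is defined), together with $u/v\to0$. Then
\[
u+ev=v\Bigl(e+\frac{u}{v}\Bigr)\longrightarrow(-\infty)\cdot e=-\infty ,
\]
which is what is needed.

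The main obstacle will be entirely in the first part: one must verify that l'H\^opital's rule is genuinely applicable, which rests on $v'$ having eventually constant sign --- automatic since $v'\in H$ --- and on $f'g/(fg')$ being a legitimate element of $H$, i.e.\ on the denominator not vanishing, which forces the separation of the case $g'\equiv0$. None of these points is deep, but getting the case analysis on $\beta$ (and disposing of the degenerate branches where $v$ is eventually constant or $|g|\equiv1$) right is where the care lies.
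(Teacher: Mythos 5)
Your proof is correct, but the paper itself gives no argument for this theorem: it simply states that ``the theorem follows from the results in section 3.1.2 of \cite{G}'' (Gruntz's thesis). So what you have produced is a self-contained replacement for a citation. Your route — reduce to $u=\log f$, $v=\log g$, observe that $u'=f'/f$, $v'=g'/g$ and hence $u'/v'$ all lie in the Hardy field and therefore have limits in $\overline{\mathbb R}$, then run a case analysis on $\beta=\lim v$ with l'H\^opital in the $\infty/\infty$ and $0/0$ branches — is the natural elementary argument and is essentially what is implicit in Gruntz's treatment of comparability classes. The second assertion via $u+ev=v(e+u/v)\to-\infty$ is exactly right.

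Two small points worth tightening. In the subcase $\beta=0$, $\alpha\neq0$, you assert that $u/v$ has eventually constant sign, but $u$ and $v$ are generally not elements of $H$, so this is not automatic from the Hardy-field property; it does hold because $g-1\in H$ is eventually of constant sign (and not identically $0$ once the degenerate $|g|\equiv1$ branch is excluded), so $v=\log g$ is eventually of one sign, and $u$ is as well since $u\to\alpha\neq0$. The same observation ($v\not\equiv0$, hence $g-1\neq0$ in $H$, hence $v$ eventually nonzero and $v'\not\equiv0$) is what licenses the $0/0$ l'H\^opital branch, so it is worth stating once up front rather than leaving it implicit. With that noted, the argument is complete and a useful addition, since the paper itself only points to the reference.
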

The theorem follows from the results in section 3.1.2 of \cite{G}.

Following \cite{G}, we say that $g$ is more rapidly varying than
$f$, or $g$ is in a higher comparability class than $f$, if \[
\lim_{x\rightarrow\infty}\frac{\log\lvert f(x)\rvert}{\log\lvert g(x)\rvert}=0\]
and we denote it $f\prec g$. We say that $f$ and $g$ have the same
order of variation, or $f$ and $g$ are in the same comparability
class, if \[
\lim_{x\rightarrow\infty}\frac{\log\lvert f(x)\rvert}{\log\lvert g(x)\rvert}\in\mathbb{R}\setminus\{0\}\]
and we denote it $f\asymp g$. We will also use $f\preceq g$ to denote
$f\prec g\vee f\asymp g$. $\omega$ is a \emph{most rapidly varying
subexpression} of $f$ if $\omega$ is a subexpression of $f$ and
no subexpression of $f$ is more rapidly varying than $\omega$. Let
$mrv(f)$ be the set of most rapidly varying subexpressions of $f$.
We will write $mrv(f)\prec g$ (resp. $mrv(f)\asymp g$) if for all
$\omega\in mrv(f)$, $\omega\prec g$ (resp. $\omega\asymp g$). Let
$mrv(f_{1},\ldots,f_{n})$ be the set of $\omega$ such that $\omega$
is a subexpression of some $f_{i}$ and no subexpression of any $f_{j}$
is more rapidly varying than $\omega$, that is $mrv(f_{1},\ldots,f_{n})$
the $\max$ of $mrv(f_{1}),\ldots,mrv(f_{n})$ (as in Algorithm 3.12
of \cite{G}).

To prove termination of our algorithm we use the notion of size of
an exp-log expression defined in \cite{G}, section 3.4.1. For an
exp-log expression $f$, let $S(f)$ be the set of subexpressions
of $f$ defined by the following conditions.
\begin{enumerate}
\item If $f$ does not contain the variable $x$, then $S(f)=\emptyset$.
\item If $f=x$, then $S(f)=\{x\}$.
\item If $f=g+h$, $f=gh$, or $f=\frac{g}{h}$, then $S(f)=S(g)\cup S(h)$.
\item If $f=g^{c}$ then $S(f)=S(g)$.
\item If $f=\exp g$ or $f=\log g$ then $S(f)=\{f\}\cup S(g)$.
\end{enumerate}
Then $Size(f)$ is defined as the cardinality of $S(f)$.

Let $\exp^{k}$ (resp. $\log^{k}$) denote $k$ times iterated exponential
(resp. logarithm), and for $f\in E_{\infty}(x)$  let $f^{\uparrow k}$
(resp. $f^{\downarrow k}$) denote $f$ with $x$ replaced with $\exp^{k}(x)$
(resp. $\log^{k}(x)$). The following algorithm computes approximations
of elements of a finite subset of $E_{\infty}(x)$ in terms of their
most rapidly varying subexpression.
\begin{algorithm}
\label{alg:MrvApprox}(MrvApprox)\\
Input: $a_{0},\ldots,a_{n}\in E_{\infty}(x)$ such that $\sum_{i=0}^{n}Size(a_{i})>0$.\\
Output: $\omega\in E_{\infty}(x)$, $b_{0},\ldots,b_{n}\in E_{\infty}(x)$,
$e_{0},\ldots,e_{n}\in\mathbb{R}\cup\{\infty\}$, $d>0$, and $k\in\mathbb{Z}_{\geq0}$
such that
\begin{itemize}
\item $\omega>0$ and $\lim_{x\rightarrow\infty}\omega=0$,
\item $mrv(b_{0},\ldots,b_{n})\prec\omega$,
\item if $a_{i}=0$ then $b_{i}=0$ and $e_{i}=\infty$,
\item if $a_{i}\neq0$ then $e_{i}\in\mathbb{R}$ and $\lim_{x\rightarrow\infty}\omega^{-(e_{i}+d)}(a_{i}^{\uparrow k}-b_{i}\omega^{e_{i}})=0$,
\item $\sum_{i=0}^{n}Size(b_{i})<\sum_{i=0}^{n}Size(a_{i})$.
\end{itemize}
\end{algorithm}
The algorithm proceeds in a very similar manner to the algorithm MrvLimit
described in \cite{G}. First, it finds the set $\Omega=mrv(a_{0},\ldots,a_{n})$.
If $x\in\Omega$ the algorithm replaces $x$ with $\exp(x)$ in $a_{0},\ldots,a_{n}$
and recomputes $\Omega$ until $x\notin\Omega$. $k$ is the number
of replacements performed in this step. Then the algorithm picks $\omega$
such that $\omega$ or $1/\omega$ belongs to $\Omega$, $\omega>0$,
and $\lim_{x\rightarrow\infty}\omega=0$, and rewrites all elements
of $\Omega$ in terms of $\omega$. If $a_{i}^{\uparrow k}$ contains
a subexpression in the same comparability class as $\omega$, let
$b_{i}\omega^{e_{i}}$ be the first term of $Series(a_{i}^{\uparrow k},\omega)$
(as in section 3.3.3 of \cite{G}), and let $d_{i}>0$ be the difference
between the exponents of $\omega$ in the second and in the first
term of the series ($d_{i}=\infty$ if $a_{i}^{\uparrow k}=b_{i}\omega^{e_{i}}$).
If $a_{i}^{\uparrow k}$ does not contain subexpressions in the same
comparability class as $\omega$, then $b_{i}=a_{i}^{\uparrow k}$,
$e_{i}=0$, and $d_{i}=\infty$. In both cases $mrv(b_{i})\prec\omega$.
Pick $0<d<\min_{0\leq i\leq n}d_{i}$. Then $\omega^{-(e_{i}+d)}(a_{i}^{\uparrow k}-b_{i}\omega^{e_{i}})$
is either $0$ or a power series in $\omega$ with positive exponents
and coefficients in a lower comparability class than $\omega$, hence
$\lim_{x\rightarrow\infty}\omega^{-(e_{i}+d)}(a_{i}^{\uparrow k}-b_{i}\omega^{e_{i}})=0$.
Section 3.4.1 of \cite{G} proves that $Size(b_{i})\leq Size(a_{i})$,
with the strict inequality if $a_{i}^{\uparrow k}$ contains a subexpression
in the same comparability class as $\omega$. This shows that the
last requirement is satisfied.

\section{Root continuity}

To prove correctness of our algorithm we need a polynomial root continuity
lemma that does not assume fixed degree of the polynomial. The lemma
is very similar to Theorem 1 of \cite{Z}, except that our version
provides explicit bounds.
\begin{lem}
\label{lem:Roots}Let\[
p=a_{n}z^{n}+\ldots+a_{0}=a_{n}(z-r_{1})\cdot\ldots\cdot(z-r_{n})\in\mathbb{C}[z]\]
where $n\geq1$ and $a_{n}\neq0$. Let $\Gamma=\max_{1\leq i\leq n}\lvert r_{i}\rvert$
and $\Delta=\min_{r_{i}\neq r_{j}}\lvert r_{i}-r_{j}\rvert$ ($\Delta=\infty$
if $r_{1}=\ldots=r_{n}$). 

Suppose that $m\geq n$, $0<\epsilon<\min(1,\frac{1}{\Gamma+1},\frac{\Delta}{2})$,
and $0<\delta<\lvert a_{n}\rvert\frac{(1-\epsilon)\epsilon^{m+n}}{1-\epsilon^{m+1}}$.

Then for every \[
q=b_{m}z^{m}+\ldots+b_{0}\in\mathbb{C}[z]\]
such that $b_{m}\neq0$, for $0\leq i\leq n$, $\lvert b_{i}-a_{i}\rvert<\delta$,
and, for $n+1\leq i\leq m$, $\lvert b_{i}\rvert<\delta$, we have\[
q=b_{m}(z-s_{1})\cdot\ldots\cdot(z-s_{m})\]
 for $1\leq i\leq n$, $\lvert s_{i}-r_{i}\rvert<\epsilon$, and for
$n+1\leq i\leq m$, $\lvert s_{i}\rvert>1/\epsilon$.\end{lem}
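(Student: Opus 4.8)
The plan is to apply Rouch\'e's theorem on two families of circles in the $z$-plane: the circles of radius $\epsilon$ centred at the roots of $p$, and the single circle of radius $1/\epsilon$ centred at the origin. The quantitative engine is one estimate, valid for every $z$ with $|z|\le 1/\epsilon$: since $a_i=0$ for $n<i\le m$,
\[
|q(z)-p(z)|\le\sum_{i=0}^{n}|b_i-a_i|\,|z|^i+\sum_{i=n+1}^{m}|b_i|\,|z|^i<\delta\sum_{i=0}^{m}|z|^i\le\delta\sum_{i=0}^{m}\epsilon^{-i}=\delta\,\frac{1-\epsilon^{m+1}}{\epsilon^{m}(1-\epsilon)}<|a_n|\epsilon^{n},
\]
where the last inequality is precisely the hypothesis $\delta<|a_n|\frac{(1-\epsilon)\epsilon^{m+n}}{1-\epsilon^{m+1}}$. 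First I would treat the small circles. Fix a value $r$ occurring with multiplicity $\mu$ among $r_1,\dots,r_n$; on $\{|z-r|=\epsilon\}$ every root $r_j\neq r$ satisfies $|z-r_j|\ge|r_j-r|-\epsilon\ge\Delta-\epsilon>\epsilon$ because $\epsilon<\Delta/2$, so $|p(z)|=|a_n|\prod_j|z-r_j|\ge|a_n|\epsilon^{n}$. With the estimate above this gives $|q-p|<|p|$ on that circle, so by Rouch\'e's theorem $q$ has exactly $\mu$ zeros (with multiplicity) in $\{|z-r|<\epsilon\}$, as does $p$. Since $2\epsilon<\Delta$ these disks are pairwise disjoint over the distinct roots of $p$, so $q$ has $n$ zeros (with multiplicity) in their union; matching the $\mu$ disk-zeros near each $r$ with the $\mu$ indices $i$ for which $r_i=r$ produces $s_1,\dots,s_n$ with $|s_i-r_i|<\epsilon$.

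Next I would use the circle $\{|z|=1/\epsilon\}$. From $\epsilon<\frac1{\Gamma+1}$ we get $\Gamma<1/\epsilon-1$, hence $|z-r_j|\ge1/\epsilon-\Gamma>1$ for all $j$ on this circle, so $|p(z)|>|a_n|\ge|a_n|\epsilon^{n}>|q(z)-p(z)|$ by the estimate. Rouch\'e's theorem then gives that $q$ has the same number of zeros as $p$ inside $\{|z|<1/\epsilon\}$, namely $n$ (all $r_i$ lie there since $\Gamma<1/\epsilon$), and $q$ has no zero on $\{|z|=1/\epsilon\}$ because there $|q|\ge|p|-|q-p|>0$. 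Since $b_m\neq0$, $q$ has exactly $m$ zeros counted with multiplicity; the $n$ found above satisfy $|s_i|<|r_i|+\epsilon\le\Gamma+\epsilon<1/\epsilon$, so were any of the remaining $m-n$ zeros to lie in $\{|z|<1/\epsilon\}$, then $q$ would have at least $n+1$ zeros there, a contradiction. Hence those zeros, labelled $s_{n+1},\dots,s_m$, all satisfy $|s_i|>1/\epsilon$, and $q=b_m(z-s_1)\cdots(z-s_m)$, as claimed.

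The argument is a careful packaging of Rouch\'e's theorem, so the only real work is the arithmetic: the choice of $1/\epsilon$ for the outer comparison radius, and checking that the stated bound on $\delta$ makes $|q-p|<|p|$ strict simultaneously on all the small circles and on the large one. The points needing attention, rather than a genuine obstacle, are the multiplicity bookkeeping when distributing $q$'s disk-zeros among repeated $r_i$'s, and keeping every inequality strict on the boundary circles so that the final $m-n$ zeros come out $>1/\epsilon$ and not merely $\ge$; the strictness already present in the hypotheses $\delta<\dots$ and $\epsilon<\min(1,\frac1{\Gamma+1},\frac{\Delta}{2})$ is exactly what supplies this. The exponent $m+n$ in the bound on $\delta$ is the one place where an off-by-one is easy to make: it arises from multiplying $\sum_{i=0}^m\epsilon^{-i}$ by $\epsilon^m$ (from bounding $|q-p|$ for $|z|\le 1/\epsilon$) and comparing with $|a_n|\epsilon^n$ (the lower bound for $|p|$ on the radius-$\epsilon$ circles).
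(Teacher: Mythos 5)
Your proof is correct and takes essentially the same approach as the paper's: the same estimate $|q-p|<\delta\sum_{k=0}^{m}\epsilon^{-k}<|a_n|\epsilon^n$ valid on $|z|\le 1/\epsilon$, and the same application of Rouch\'e's theorem on the disks of radius $\epsilon$ about each root of $p$ together with the disk of radius $1/\epsilon$ about the origin. Your write-up merely spells out a few steps the paper leaves implicit, such as why $|p|\ge|a_n|\epsilon^n$ on each of the boundary circles.
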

\begin{proof}
Let $C_{0}=\{c\,:\,\lvert c\rvert=1/\epsilon\}$, $D_{0}=\{c\,:\,\lvert c\rvert\leq1/\epsilon\}$,
and, for $1\leq i\leq n$, let $C_{i}=\{c\,:\,\lvert c-r_{i}\rvert=\epsilon\}$,
and $D_{i}=\{c\,:\,\lvert c-r_{i}\rvert\leq\epsilon\}$. Then, for
$1\leq i,j\leq n$, $D_{i}$ and $D_{j}$ are either identical or
disjoint, $D_{i}$ is contained in the interior of $D_{0}$, $D_{i}$
contains exactly one of the distinct roots of $p$, and $D_{0}$ contains
all roots of $p$. If $c\in C_{i}$ for some $0\leq i\leq n$, then
$\lvert p(c)\rvert\geq\lvert a_{n}\rvert\epsilon^{n}$ and\[
\lvert q(c)-p(c)\rvert\leq\sum_{k=0}^{n}\lvert b_{k}-a_{k}\rvert\lvert z\rvert^{k}+\sum_{k=n+1}^{m}\lvert b_{k}\rvert\lvert z\rvert^{k}\leq\delta\sum_{k=0}^{m}\frac{1}{\epsilon^{k}}\]
We have\[
\delta\sum_{k=0}^{m}\frac{1}{\epsilon^{k}}<\lvert a_{n}\rvert\frac{(1-\epsilon)\epsilon^{m+n}}{1-\epsilon^{m+1}}\frac{1-1/\epsilon^{m+1}}{1-1/\epsilon}=\lvert a_{n}\rvert\epsilon^{n}\]
Hence $\lvert q(c)-p(c)\rvert<\lvert p(c)\rvert$. By Rouche's theorem,
for $0\leq i\leq n$, the number of roots of $q$ in $D_{i}$ equals
the number of roots of $p$ in $D_{i}$, which concludes the proof.
\end{proof}

\section{The main algorithm}

Let $E_{\infty}^{\mathbb{C}}(x)=\{u+\imath v\,:\, u,v\in E_{\infty}(x)\}$.
This section presents the main algorithm computing asymptotic approximations
of roots of polynomials $P\in E_{\infty}^{\mathbb{C}}(x)[y]$. 

Let us first describe a straightforward generalization of Algorithm
\ref{alg:MrvApprox} to inputs in $E_{\infty}^{\mathbb{C}}(x)$. We
extend $mrv$ and $Size$ to $E_{\infty}^{\mathbb{C}}(x)$ by defining
$mrv(u_{1}+\imath v_{1},\ldots,u_{n}+\imath v_{n})=mrv(u_{1},v_{1},\ldots,u_{n},v_{n})$
and $Size(u+\imath v)=Size(u)+Size(v)$. We say that $u+\imath v\prec\omega$
(resp. $u+\imath v\asymp\omega$) if $\lvert u+\imath v\rvert=(u^{2}+v^{2})^{1/2}\prec\omega$
(resp. $\lvert u+\imath v\rvert\asymp\omega$). If $a=u+\imath v$
then $a^{\uparrow k}:=u^{\uparrow k}+\imath v^{\uparrow k}$ and $a^{\downarrow k}:=u^{\downarrow k}+\imath v^{\downarrow k}$.
\begin{algorithm}
\label{alg:MrvApproxC}(MrvApproxC)\\
Input: $a_{0},\ldots,a_{n}\in E_{\infty}^{\mathbb{C}}(x)$ such
that $\sum_{i=0}^{n}Size(a_{i})>0$.\\
Output: $\omega\in E_{\infty}(x)$, $b_{0},\ldots,b_{n}\in E_{\infty}^{\mathbb{C}}(x)$,
$e_{0},\ldots,e_{n}\in\mathbb{R}\cup\{\infty\}$, $d>0$, and $k\in\mathbb{Z}_{\geq0}$
such that
\begin{itemize}
\item $\omega>0$ and $\lim_{x\rightarrow\infty}\omega=0$,
\item $mrv(b_{0},\ldots,b_{n})\prec\omega$,
\item if $a_{i}=0$ then $b_{i}=0$ and $e_{i}=\infty$,
\item if $a_{i}\neq0$ then $e_{i}\in\mathbb{R}$ and $\lim_{x\rightarrow\infty}\omega^{-(e_{i}+d)}(a_{i}^{\uparrow k}-b_{i}\omega^{e_{i}})=0$,
\item $\sum_{i=0}^{n}Size(b_{i})<\sum_{i=0}^{n}Size(a_{i})$.\end{itemize}
\begin{enumerate}
\item Let $a_{i}=u_{i}+\imath v_{i}$. Call Algorithm \ref{alg:MrvApprox}
with $u_{0},v_{0},\ldots,u_{n},v_{n}$ as input, obtaining $\omega\in E_{\infty}(x)$,
$\hat{u}_{0},\hat{v}_{0},\ldots,\hat{u}_{n},\hat{v}_{n}\in E_{\infty}(x)$,
$e_{u,0},e_{v,0},\ldots,e_{u,n}e_{v,n}\in\mathbb{R}$, $\hat{d}>0$,
and $k\in\mathbb{Z}_{\geq0}$. 
\item For $0\leq i\leq n$, put $e_{i}:=\min(e_{u,i},e_{v,i})$ and\[
b_{i}:=\begin{cases}
\hat{u}_{i} & e_{u,i}<e_{v,i}\\
\hat{u}_{i}+\imath\hat{v}_{i} & e_{u,i}=e_{v,i}\\
\imath\hat{v}_{i} & e_{u,i}>e_{v,i}\end{cases}\]

\item Pick $d$ such that $0<d<\hat{d}$ and $d<\lvert e_{u,i}-e_{v,i}\rvert$
for all $i$ such that $e_{u,i}\neq e_{v,i}$. 
\item Return $\omega$, $b_{0},\ldots,b_{n}$, $e_{0},\ldots,e_{n}$, $d$,
and $k$.
\end{enumerate}
\end{algorithm}
\begin{proof}
To prove that the output of Algorithm \ref{alg:MrvApproxC} satisfies
the required conditions we need to prove that if $a_{i}\neq0$ then\[
\lim_{x\rightarrow\infty}\omega^{-(e_{i}+d)}(a_{i}^{\uparrow k}-b_{i}\omega^{e_{i}})=0\]
 The other conditions follow directly from the definitions and the
properties of the output of Algorithm \ref{alg:MrvApprox}. 

Suppose that $e_{u,i}<e_{v,i}$. Then $e_{i}=e_{u,i}$ and \[
\omega^{-(e_{i}+d)}(a_{i}^{\uparrow k}-b_{i}\omega^{e_{i}})=\omega^{-(e_{u,i}+d)}(u_{i}^{\uparrow k}-\hat{u}_{i}\omega^{e_{u,i}})+\imath\omega^{-(e_{u,i}+d)}v_{i}^{\uparrow k}\]
We have\begin{align*}
 & \lim_{x\rightarrow\infty}\omega^{-(e_{u,i}+d)}(u_{i}^{\uparrow k}-\hat{u}_{i}\omega^{e_{u,i}})=\\
 & \lim_{x\rightarrow\infty}\omega^{\hat{d}-d}\omega^{-(e_{u,i}+\hat{d})}(u_{i}^{\uparrow k}-\hat{u}_{i}\omega^{e_{u,i}})=0\end{align*}
and\begin{align*}
 & \lim_{x\rightarrow\infty}\omega^{-(e_{u,i}+d)}v_{i}^{\uparrow k}=\\
 & \lim_{x\rightarrow\infty}\omega^{(e_{v,i}-e_{u,i})+(\hat{d}-d)}\omega^{-(e_{v,i}+\hat{d})}(v_{i}^{\uparrow k}-\hat{v}_{i}\omega^{e_{v,i}})+\\
 & \hat{v}_{i}\omega^{(e_{v,i}-e_{u,i})-d}=0\end{align*}
since $(e_{v,i}-e_{u,i})+(\hat{d}-d)>0$, $(e_{v,i}-e_{u,i})-d>0$,
and $\hat{v}_{i}\prec\omega$. Cases $e_{u,i}=e_{v,i}$ and $e_{u,i}>e_{v,i}$
can be proven in a similar manner.
\end{proof}
Let $P(x,y)=a_{n}(x)y^{n}+\ldots+a_{0}(x)\in E_{\infty}^{\mathbb{C}}(x)[y]$.
W.l.o.g. we may assume that $a_{n}$ and $a_{0}$ are not identically
zero.

Suppose that $\sum_{i=0}^{n}Size(a_{i})>0$ i.e. $P(x,y)$ depends
on $x$. Let $\omega\in E_{\infty}(x)$, $b_{0},\ldots,b_{n}\in E_{\infty}^{\mathbb{C}}(x)$,
$e_{0},\ldots,e_{n}\in\mathbb{R}\cup\{\infty\}$, $d>0$, and $k\in\mathbb{Z}_{\geq0}$
be the output of Algorithm \ref{alg:MrvApproxC} for $a_{0},\ldots,a_{n}$,
and let $Q(x,y)=a_{n}^{\uparrow k}(x)y^{n}+\ldots+a_{0}^{\uparrow k}(x)$.

Let $K$ be a Hardy field containing germs at infinity of exp-log
functions defined near infinity, such that $K[\imath]$ is algebraically
closed. Let $\alpha\in K[\imath]$ be a root of $Q$. Since $\lvert\alpha\rvert\in K$,
the limit $\lim_{x\rightarrow\infty}\frac{\log\lvert\alpha\rvert}{\log\lvert\omega\rvert}=\gamma$
exists. 
\begin{claim}
$\gamma\in\mathbb{R}$. \end{claim}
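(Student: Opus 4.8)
The plan is to show that $\gamma$ cannot be $+\infty$ or $-\infty$ by bounding $|\alpha|$ polynomially in terms of $\omega$ and its reciprocal. The key observation is that $\alpha$ is a root of $Q$, so the standard Cauchy-type root bound applies: if $\alpha\neq 0$ and $a_{n}^{\uparrow k}$ is the leading coefficient, then
\[
|\alpha|\leq 1+\max_{0\leq i<n}\frac{|a_{i}^{\uparrow k}|}{|a_{n}^{\uparrow k}|},
\]
and likewise, applying the same bound to the reciprocal polynomial (or using that $a_{0}^{\uparrow k}\neq 0$ since $a_0$ is not identically zero), one gets a lower bound of the form $|\alpha|\geq \bigl(1+\max_{0<i\leq n}|a_{i}^{\uparrow k}|/|a_{0}^{\uparrow k}|\bigr)^{-1}$. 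Both sides are built from finitely many ratios of the $a_{i}^{\uparrow k}$, which are nonzero elements of the Hardy field $K$.

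First I would recall from the theorem on Hardy fields (the one stating $\lim_{x\to\infty}\log|f|/\log|g|$ exists in $\overline{\mathbb{R}}$) that for each pair $i,j$ the limit $\lim_{x\to\infty}\log|a_{i}^{\uparrow k}|/\log|\omega|$ exists in $\overline{\mathbb{R}}$; call it $\gamma_{i}$. Since $\omega\to 0$, $\log|\omega|\to -\infty$, so $\log|a_{i}^{\uparrow k}/a_{j}^{\uparrow k}| = \log|a_i^{\uparrow k}| - \log|a_j^{\uparrow k}|$ has a well-defined comparison: more precisely, $\log$ of any finite product of powers of the $a_i^{\uparrow k}$ divided by $\log|\omega|$ tends to a finite limit provided no $\gamma_i$ is infinite, and in general one argues within the Hardy field $K$, where each ratio $r = a_i^{\uparrow k}/a_j^{\uparrow k}$ satisfies $\log|r|/\log|\omega| \to \ell_{ij}\in\overline{\mathbb{R}}$. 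The point is that a finite $\max$ of such quantities still has $\log$ comparable to $\log|\omega|$: namely $\log\bigl(1+\max_i |r_i|\bigr)$ is, up to bounded additive error, $\max(0,\max_i \log|r_i|)$, so dividing by $\log|\omega|\to-\infty$ gives a limit equal to $\min(0,\min_i \ell_i)$, which is in $\overline{\mathbb{R}}$ but could a priori be $-\infty$.

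This is where the real content lies, and it is the main obstacle: I must rule out $\gamma=\pm\infty$, i.e.\ I must show the upper and lower bounds on $|\alpha|$ are themselves of the same order of variation as a real power of $\omega$, not something varying more rapidly than every power of $\omega$. The resolution is that $mrv(a_0^{\uparrow k},\ldots,a_n^{\uparrow k})\asymp\omega$ by construction (after the $k$ iterated-exp replacements, $\omega$ lies in the mrv set of the coefficients, so every subexpression of every $a_i^{\uparrow k}$ is $\preceq\omega$). Hence each $a_{i}^{\uparrow k}\preceq\omega$ in the sense that $\log|a_i^{\uparrow k}|/\log|\omega|$ has a \emph{finite} limit (an element of a Hardy field whose mrv is dominated by $\omega$ cannot vary more rapidly than $\omega$; this follows from the theory of comparability classes in \cite{G}, since the comparability class of $f$ is governed by its most rapidly varying subexpression). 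Therefore each $\gamma_i\in\mathbb{R}$, so each ratio limit $\ell_{ij}=\gamma_i-\gamma_j\in\mathbb{R}$, and the $\min$/$\max$ combinations above stay in $\mathbb{R}$.

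Putting it together: sandwiching $|\alpha|$ between $c_{1}\omega^{M}$ and $c_{2}\omega^{-M}$ for a suitable real $M$ and Hardy-field elements $c_1,c_2$ of order of variation $\prec\omega$ (or $\asymp\omega$), I take $\log$, divide by $\log|\omega|$, and let $x\to\infty$; the squeeze forces $-M\leq\gamma\leq M$, so $\gamma\in\mathbb{R}$, which completes the proof. I would write this out by first stating the Cauchy bounds explicitly, then invoking the Hardy-field limit theorem together with the construction of $\omega$ to conclude each coefficient's $\log$ is $O(\log|\omega|)$, and finally performing the squeeze.
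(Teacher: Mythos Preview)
Your approach is correct and complete in outline, but it differs from the paper's proof. The paper argues by contradiction: if $|\gamma|=\infty$ then $\omega\prec|\alpha|$, and dividing $Q(x,\alpha)=0$ by $a_n^{\uparrow k}\alpha^n$ gives
\[
0=1+\sum_{i=1}^{n}\frac{a_{n-i}^{\uparrow k}}{a_n^{\uparrow k}\alpha^{i}}.
\]
Since each ratio $a_{n-i}^{\uparrow k}/a_n^{\uparrow k}$ is $\preceq\omega\prec|\alpha|$, the sum tends to $0$ when $|\alpha|\to\infty$ and to $\infty$ in absolute value when $|\alpha|\to 0$ (the $i=n$ term, which is present because $a_0\neq 0$, dominates), contradicting the identity in either case. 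Your route via the Cauchy root bounds is somewhat longer but has the advantage of being constructive: it produces an explicit real interval containing $\gamma$, essentially $[\min_i(e_i-e_0),\,\max_i(e_i-e_n)]$ in terms of the exponents $e_i$ returned by Algorithm~\ref{alg:MrvApproxC}. Both arguments rest on the same underlying fact---that each nonzero $a_i^{\uparrow k}$ satisfies $\lim_{x\to\infty}\log|a_i^{\uparrow k}|/\log|\omega|\in\mathbb{R}$---which you correctly locate in the mrv construction. You could streamline that step by citing the output specification of Algorithm~\ref{alg:MrvApproxC} directly: from $a_i^{\uparrow k}=b_i\omega^{e_i}(1+o(1))$ with $b_i\prec\omega$ one gets the finite limit $e_i$ immediately, without appealing to general facts about comparability classes.
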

\begin{proof}
Suppose that $\lvert\gamma\rvert=\infty$. Then\begin{equation}
0=\lim_{x\rightarrow\infty}\frac{Q(x,\alpha)}{a_{n}^{\uparrow k}\alpha^{n}}=\lim_{x\rightarrow\infty}1+\sum_{i=1}^{n}\frac{a_{n-i}^{\uparrow k}}{a_{n}^{\uparrow k}\alpha^{i}}\label{eq:gamma-proof}\end{equation}
Since $\omega\prec\lvert\alpha\rvert$, $\frac{a_{n-i}^{\uparrow k}}{a_{n}^{\uparrow k}}\prec\lvert\alpha\rvert$,
and so either $\lim_{x\rightarrow\infty}\lvert\alpha\rvert=\infty$
and\[
\lim_{x\rightarrow\infty}\lvert\sum_{i=1}^{n}\frac{a_{n-i}^{\uparrow k}}{a_{n}^{\uparrow k}\alpha^{i}}\rvert=0\]
or $\lim_{x\rightarrow\infty}\lvert\alpha\rvert=0$ and\[
\lim_{x\rightarrow\infty}\lvert\sum_{i=1}^{n}\frac{a_{n-i}^{\uparrow k}}{a_{n}^{\uparrow k}\alpha^{i}}\rvert=\infty\]
Both cases contradict equation (\ref{eq:gamma-proof}). 
\end{proof}
Let $I=\{i\,:\,0\leq i\leq n\wedge a_{i}\neq0\}$. If $i\in I$, put\[
c_{i}=\omega^{-(e_{i}+d)}(a_{i}^{\uparrow k}-b_{i}\omega^{e_{i}})\]
 Then $a_{i}^{\uparrow k}=b_{i}\omega^{e_{i}}+c_{i}\omega^{e_{i}+d}$
and $\lim_{x\rightarrow\infty}c_{i}=0$. Put $\beta:=\frac{\alpha}{\omega^{\gamma}}$.
Then\[
\gamma=\lim_{x\rightarrow\infty}\frac{\log\lvert\beta\omega^{\gamma}\rvert}{\log\lvert\omega\rvert}=\gamma+\lim_{x\rightarrow\infty}\frac{\log\lvert\beta\rvert}{\log\lvert\omega\rvert}\]
and hence $\beta\prec\omega$. We have\[
Q(x,\alpha)=\sum_{i\in I}(b_{i}\beta^{i}\omega^{e_{i}+\gamma i}+c_{i}\beta^{i}\omega^{e_{i}+d+\gamma i})\]
Let $\mu=\min_{i\in I}e_{i}+\gamma i$, let $J=\{i\in I\,:\, e_{i}+\gamma i=\mu\}$,
and let \[
\mu<\nu<\min(\mu+d,\min_{i\in I\setminus J}e_{i}+\gamma i)\]
Then\[
0=\omega^{-\nu}Q(x,\alpha)=\omega^{\mu-\nu}\sum_{i\in J}b_{i}\beta^{i}+\sum_{i\in I\setminus J}b_{i}\beta^{i}\omega^{e_{i}+\gamma i-\nu}+\sum_{i\in I}c_{i}\beta^{i}\omega^{e_{i}+d+\gamma i-\nu}\]
 As $x$ tends to infinity, all terms in the last two sums tend to
zero, hence \[
\lim_{x\rightarrow\infty}\omega^{\mu-\nu}\sum_{i\in J}b_{i}\beta^{i}=0\]
 Since $\beta\prec\omega$ and $\mu-\nu<0$, the cardinality of $J$
must be at least $2$. Consider the subset $A=\{(i,e_{i})\,:\, i\in I\}$
of $\mathbb{R}^{2}$ with coordinates denoted $(x_{1},x_{2})$. Then
the line $x_{2}=-\gamma x_{1}+\mu$ passes through the points $\{(i,e_{i})\,:\, i\in J\}$
and all the other points of $A$ lie above this line. This means that
$-\gamma$ is the slope of one of the segments that form the lower
part of the boundary of the convex hull of $A$.

Let $R_{\gamma}(x,z)=\sum_{i\in J}b_{i}z^{i}$ and $Q_{\gamma}(x,z)=\omega^{-\mu}Q(x,z\omega^{\gamma})$.
We have\[
Q_{\gamma}(x,z)=R_{\gamma}(x,z)+\omega^{\nu-\mu}(\sum_{i\in I\setminus J}b_{i}\omega^{e_{i}+\gamma i-\nu}z^{i}+\sum_{i\in I}c_{i}\omega^{e_{i}+d+\gamma i-\nu}z^{i})\]

Let $\rho_{1},\ldots,\rho_{t}\in K[\imath]\setminus\{0\}$ be the
nonzero roots of $R_{\gamma}(x,z)$ listed with multiplicities.
\begin{claim}
There exist roots $\beta_{1},\ldots,\beta_{t}\in K[\imath]$ of $Q_{\gamma}(x,z)$
such that, for sufficiently large $x$, for $1\leq j\leq t$, $\lvert\beta_{j}-\rho_{j}\rvert<\omega^{\eta}$,
where $\eta=\frac{\nu-\mu}{4(n+1)}>0$. \end{claim}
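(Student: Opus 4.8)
The plan is to apply the root continuity lemma, Lemma~\ref{lem:Roots}, \emph{pointwise} at each sufficiently large $x$, to the pair $p=R_\gamma(x,\cdot)$ and $q=Q_\gamma(x,\cdot)$, with $\epsilon=\omega(x)^\eta$ and a suitable $\delta=\delta(x)$, and then to lift the resulting pointwise inequalities to a statement about germs in $K[\imath]$. Set $n'=\max J$, so that $R_\gamma$ has degree $n'$ with leading coefficient $b_{n'}\neq 0$, whereas $Q_\gamma$ has degree $n$ with leading coefficient $a_n^{\uparrow k}\omega^{\gamma n-\mu}\neq 0$; since $J\subseteq\{0,\dots,n\}$ we have $n'\le n$, and since $\lvert J\rvert\ge 2$ we have $t=n'-j_0\ge 1$, where $j_0=\min J$. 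Order the roots of $R_\gamma(x,\cdot)$ as $\rho_1,\dots,\rho_t,0,\dots,0$, the root $0$ having multiplicity $j_0$. Lemma~\ref{lem:Roots}, applied with its ``$n$'' taken to be $n'$ and its ``$m$'' taken to be $n$, then yields roots $s_1,\dots,s_{n'}$ of $Q_\gamma(x,\cdot)$ with $\lvert s_j-\rho_j\rvert<\omega(x)^\eta$ for $1\le j\le t$, which is the desired pointwise conclusion.

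Two hypotheses of Lemma~\ref{lem:Roots} remain to be checked for all large $x$. For the bound on $\epsilon$, I would use that $mrv(b_0,\dots,b_n)\prec\omega$ implies, by the theory of \cite{G}, that the coefficients of $R_\gamma$ lie in a subfield $F$ of $K[\imath]$ every nonzero element of which is $\prec\omega$. Each root $\rho_j$ is algebraic over $F$; since the comparability-class valuation $f\mapsto\lim_{x\to\infty}\log\lvert f\rvert/\log\lvert\omega\rvert$ vanishes on $F^\times$ and its value group stays trivial on the algebraic closure, we get $\rho_j\prec\omega$ and, for $\rho_i\neq\rho_j$, $\rho_i-\rho_j\prec\omega$. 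Hence $\Gamma=\max_j\lvert\rho_j\rvert\prec\omega$ and (using that the minimizing pair for $\Delta$ eventually stabilizes) $\Delta^{-1}\prec\omega$ when $\Delta<\infty$, so that $\Gamma+1<\omega^{-\eta/2}$ and $\Delta>\omega^{\eta/2}$ for large $x$, which gives $0<\omega^\eta<\min\bigl(1,\tfrac{1}{\Gamma+1},\tfrac{\Delta}{2}\bigr)$. For the bound on $\delta$, write $a_i^{\uparrow k}=b_i\omega^{e_i}+c_i\omega^{e_i+d}$; then the coefficient of $z^i$ in $Q_\gamma$ minus that in $R_\gamma$ equals $c_i\omega^{d}$ for $i\in J$, equals $b_i\omega^{e_i+\gamma i-\mu}+c_i\omega^{e_i+d+\gamma i-\mu}$ for $i\in I\setminus J$, and is $0$ otherwise. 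Because $e_i+\gamma i-\mu>\nu-\mu$ for $i\in I\setminus J$, because $d>\nu-\mu$, and because $\eta=\frac{\nu-\mu}{4(n+1)}$ was chosen so that $(n+n')\eta<\nu-\mu$ (as $n+n'<4(n+1)$), each such difference is eventually smaller than $\lvert b_{n'}\rvert\frac{(1-\omega^\eta)\omega^{\eta(n+n')}}{1-\omega^{\eta(n+1)}}$; so for large $x$ one may take $\delta(x)$ to be any value strictly between the largest coefficient difference and this bound.

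Having obtained the pointwise statement, I would lift it to germs. Since $Q_\gamma\in K[\imath][z]$ and $K[\imath]$ is algebraically closed, $Q_\gamma$ has roots $\beta^{(1)},\dots,\beta^{(n)}\in K[\imath]$, and for each pair $(l,j)$ the germ $\lvert\beta^{(l)}-\rho_j\rvert-\omega^\eta\in K$ has eventually constant sign. For every large $x$ the pointwise statement provides, for each $j$, some root of $Q_\gamma(x,\cdot)$ --- hence some $\beta^{(l)}(x)$ --- within $\omega(x)^\eta$ of $\rho_j(x)$; since there are only finitely many indices $l$, for each $j$ some $\beta^{(l)}$ must satisfy $\lvert\beta^{(l)}-\rho_j\rvert<\omega^\eta$ for all large $x$, and we put $\beta_j:=\beta^{(l)}$. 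As $\nu>\mu$ we have $\eta>0$, and the proof is complete.

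The step I expect to be the main obstacle is the lower bound on $\Delta$: a priori the minimal separation of the roots of $R_\gamma$ could be smaller than every power of $\omega$ (through cancellation among the $b_i$), which would make $\epsilon=\omega^\eta$ too large for Lemma~\ref{lem:Roots} and would break the chosen value of $\eta$. Ruling this out is exactly what the structural input from \cite{G} --- that $mrv(b_0,\dots,b_n)\prec\omega$ forces the $b_i$ to generate a subfield on which the comparability-class valuation is trivial --- is needed for; pinning down that input precisely, together with the pointwise-to-germ bookkeeping, is where the real work lies.
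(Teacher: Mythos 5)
Your proof is correct and reaches the same conclusion by the same overall strategy as the paper (apply Lemma~\ref{lem:Roots} pointwise to the pair $R_\gamma,Q_\gamma$ with $\epsilon=\omega^\eta$ and a $\delta$ sandwiched between the coefficient perturbations and $\lvert b_{n'}\rvert(1-\epsilon)\epsilon^{n+n'}/(1-\epsilon^{n+1})$, then pass to germs), but you obtain the two crucial side conditions --- $\Gamma\prec\omega$ and $\Delta^{-1}\prec\omega$ --- by a genuinely different argument. The paper bounds $\Gamma$ above by a rational expression in the $\lvert b_i\rvert$ and $\Delta$ below via a Mignotte-type separation bound applied to the squarefree part of $R_\gamma$, then observes that every such explicit bound, being built from the $b_i$ by rational operations, square roots and absolute values, stays in a comparability class below $\omega$. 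You instead exploit the fact that the comparability-class valuation $v(f)=\lim\log\lvert f\rvert/\log\lvert\omega\rvert$ is trivial on the subfield $F$ generated by the $b_i$ (a consequence of $mrv(b_i)\prec\omega$), that its value group on any algebraic extension of $F$ inside $K[\imath]$ remains trivial (the divisible hull of $\{0\}$ is $\{0\}$, or, more directly, a flat Newton polygon forces $v(\rho_j)=0$), and that the same holds for the nonzero differences $\rho_i-\rho_j$, giving $\Delta^{-1}\prec\omega$ without any explicit bound. Your route is more conceptual and avoids the Mignotte machinery, but it leans on facts you only sketch --- that $v$ genuinely behaves like a valuation on the relevant subring of $K[\imath]$ and that its triviality on $F^\times$ persists under algebraic extension --- which would need to be made precise (the paper's explicit-bound route sidesteps these foundational questions entirely). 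You also make the pointwise-to-germ lifting at the end explicit, which the paper leaves implicit; that step is correct and worth spelling out, since it is where the eventual-constancy of signs in the Hardy field is actually used. Finally, you correctly identify the root-separation bound as the real content of the claim.
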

\begin{proof}
Let $\Gamma(x)$ be the maximum of absolute values of roots of $R_{\gamma}(x,z)$
and let $\Delta(x)$ be the minimum distance between two distinct
roots of $R_{\gamma}(x,z)$ ($\Delta(x)=\infty$ if all roots of $R_{\gamma}(x,z)$
are equal). Put $\delta(x)=\omega(x)^{(\nu-\mu)/2}$ and $\epsilon(x)=\delta(x)^{1/(2n+2)}$.
For sufficiently large $x$, $R_{\gamma}(x,z)$ has a fixed number
of distinct roots in $z$, equal to its number of distinct roots in
$K[\imath]$. $\Gamma(x)$ can be bounded from above by a rational
function in absolute values of $b_{i}$, for $i\in J$, and, since
the coefficients in $z$ of $R_{1}(x,z)/g.c.d.(R_{1}(x,z),\frac{\partial}{\partial z}R_{1}(x,z))$
are rational functions of $b_{i}$, $\Delta(x)$ can be bounded from
below by an expression constructed from $b_{i}$ using rational operations,
square roots and absolute value (see e.g. \cite{M}, Theorem 5). Since
$mrv(b_{i})\prec\omega$, for sufficiently large $x$, $0<\epsilon<\min(1,\frac{1}{\Gamma+1},\frac{\Delta}{2})$.
Let $s$ be the degree of $R_{\gamma}$ in $z$. For sufficiently
large $x$, \[
\lvert b_{s}\rvert\frac{(1-\epsilon)\epsilon^{n+s}}{1-\epsilon^{n+1}}>\lvert b_{s}\rvert\epsilon^{n+s+1}>\epsilon^{2n+2}=\delta\]
The coefficients at $z^{i}$, for $0\leq i\leq n$, of $Q_{\gamma}(x,z)-R_{\gamma}(x,z)$
have the form $\omega^{\nu-\mu}\xi_{i}$ and $\lim_{x\rightarrow\infty}\xi_{i}=0$,
hence, for sufficiently large $x$, the absolute value of each of
these coefficients is less than $\delta$. By Lemma \ref{lem:Roots},
there exist roots $\beta_{1},\ldots,\beta_{t}\in K[\imath]$ of $Q_{\gamma}(x,z)$
such that, for sufficiently large $x$, for $1\leq j\leq t$, $\lvert\beta_{j}-\rho_{j}\rvert<\epsilon=\omega^{\eta}$.
\end{proof}
Fix $1\leq j\leq t$ and let $\alpha_{j}=\beta_{j}\omega^{\gamma}$.
Then $Q(x,\alpha_{j})=\omega^{\mu}Q_{\gamma}(x,\beta_{j})=0$. Suppose
that $f_{1},\ldots,f_{m}$ form an $m$-term asymptotic approximation\emph{
}of\emph{ }$\rho_{j}$ such that, for $1\leq i\leq m$, $mrv(f_{i})\prec\omega$.
For $1\leq i\leq m$, put $g_{i}=f_{i}\omega^{\gamma}$. We have\[
\frac{\alpha_{j}-\sum_{i=1}^{m}g_{i}}{g_{m}}=\frac{\beta_{j}-\sum_{i=1}^{m}f_{i}}{f_{m}}=\frac{\beta_{j}-\rho_{j}}{f_{m}}+\frac{\rho_{j}-\sum_{i=1}^{m}f_{i}}{f_{m}}\]
For sufficiently large $x$, $\lvert\beta_{j}-\rho_{j}\rvert<\omega^{\eta}$.
Since $f_{m}\prec\omega$, $\lim_{x\rightarrow\infty}\frac{\omega^{\eta}}{f_{m}}=0$.
Hence, $\lim_{x\rightarrow\infty}\frac{\alpha_{j}-\sum_{i=1}^{m}g_{i}}{g_{m}}=0$,
and so $g_{1},\ldots,g_{m}$ form an $m$-term asymptotic approximation\emph{
}of\emph{ }$\alpha_{j}$. Since for any $f\in K[\imath]$\[
\lim_{x\rightarrow\infty}f(x)=\lim_{x\rightarrow\infty}f(\log^{k}(x))\]
$g_{1}^{\downarrow k},\ldots,g_{m}^{\downarrow k}$ form an $m$-term
asymptotic approximation of the root $\alpha_{j}(\log^{k}(x))$ of
$P$.

Let us now consider the case where we have found an exact solution
$\rho_{j}\in E_{\infty}^{\mathbb{C}}(x)$ of $R_{\gamma}(x,z)$. To
simplify the description of the case let us make the following rather
technical definition.
\begin{defn}
Let $\omega\in E_{\infty}(x)$, $b_{0},\ldots,b_{n}\in E_{\infty}^{\mathbb{C}}(x)$,
$e_{0},\ldots,e_{n}\in\mathbb{R}\cup\{\infty\}$, $d>0$, and $k\in\mathbb{Z}_{\geq0}$
be the result of applying Algorithm \ref{alg:MrvApproxC} to $a_{0},\ldots,a_{n}$.
We will call a root $\alpha(\log^{k}(x))$ of $P$ \emph{asymptotically
small} if $\lim_{x\rightarrow\infty}\alpha=0$ and $\alpha\asymp\omega$
(in other words, $\alpha=\beta\omega^{\gamma}$ with $\gamma>0$).
\end{defn}
Suppose that we have found an exact solution $\rho_{j}\in E_{\infty}^{\mathbb{C}}(x)$
of $R_{\gamma}(x,z)$ of multiplicity $\sigma_{j}$. Then there exist
exactly $\sigma_{j}$ roots $\beta_{1,j},\ldots,\beta_{\sigma_{j},j}\in K[\imath]$
of $Q_{\gamma}(x,z)$ such that, for sufficiently large $x$, for
$1\leq\iota\leq\sigma_{j}$, $\lvert\beta_{\iota,j}-\rho_{j}\rvert<\omega^{\eta}$.
Hence, there exist exactly $\sigma_{j}$ roots $\alpha_{\iota,j}=\beta_{\iota,j}\omega^{\gamma}$
of $Q(x,y)$ such that, for sufficiently large $x$, for $1\leq\iota\leq\sigma_{j}$,
$\lvert\alpha_{\iota,j}-\rho_{j}\omega^{\gamma}\rvert<\omega^{\gamma+\eta}$.
The mapping $\varphi:\zeta\rightarrow\rho_{j}\omega^{\gamma}+\omega^{\gamma}\zeta$
is a bijection between the roots $\zeta$ of $Q(x,\rho_{j}\omega^{\gamma}+\omega^{\gamma}y)$
such that, for sufficiently large $x$, $\lvert\zeta\rvert<\omega^{\eta}$
and the roots $\varphi(\zeta)$ of $Q(x,y)$ such that $\lvert\varphi(\zeta)-\rho_{j}\omega^{\gamma}\rvert<\omega^{\gamma+\eta}$.
Since $\nu>\mu$ can be chosen arbitrarily close to $\mu$, $\eta$
can be arbitrarily small. Therefore the mapping $\xi\rightarrow(\rho_{j}\omega^{\gamma})^{\downarrow k}+(\omega^{\gamma})^{\downarrow k}\xi$
is a bijection between the roots $\xi$ of $P(x,(\rho_{j}\omega^{\gamma})^{\downarrow k}+(\omega^{\gamma})^{\downarrow k}y)$
that are identically zero or asymptotically small and the roots $\alpha_{1,j}(\log^{k}(x)),\ldots,\alpha_{\sigma_{j},j}(\log^{k}(x))$
of $P(x,y)$.

The above discussion suggests the following procedure for finding
asymptotic approximations of roots of $P$. Use Algorithm \ref{alg:MrvApproxC}
for $a_{0},\ldots,a_{n}$, to find $\omega\in E_{\infty}(x)$, $b_{0},\ldots,b_{n}\in E_{\infty}^{\mathbb{C}}(x)$,
$e_{0},\ldots,e_{n}\in\mathbb{R}\cup\{\infty\}$, $d>0$, and $k\in\mathbb{Z}_{\geq0}$.
Compute the values of $\gamma$ such that $-\gamma$ is the slope
of one of the segments that form the lower part of the boundary of
the convex hull of $A=\{(i,e_{i})\,:\, i\in I\}$. For each $\gamma$
find $R_{\gamma}(x,z)$ and call the procedure recursively to find
asymptotic approximations of roots of $R_{\gamma}$. Finally, obtain
asymptotic approximations of roots of $P$ by multiplying the terms
of asymptotic approximations of roots of $R_{\gamma}$ by $\omega^{\gamma}$
and replacing $x$ with $\log^{k}(x)$. The following algorithm formalizes
this procedure, handles the base case, and the case where we get an
exact solution with less than the requested $m$ terms. 
\begin{notation}
We use the notation $\sqcup$ for joining lists, that is\[
(f_{1},\ldots,f_{l})\sqcup(g_{1},\ldots,g_{m})=(f_{1},\ldots,f_{l},g_{1},\ldots,g_{m})\]
For a list $F=(f_{1},\ldots,f_{l})$ of expressions in $E_{\infty}^{\mathbb{C}}(x)$
let\[
F^{\downarrow k}=(f_{1}^{\downarrow k},\ldots,f_{l}^{\downarrow k})\]
let\[
\sum F=f_{1}+\ldots+f_{l}\]
and, for $g\in E_{\infty}^{\mathbb{C}}(x)$, let\[
gF=(gf_{1},\ldots,gf_{l})\]
\end{notation}
\begin{algorithm}
\label{alg:AsymptoticSolutions}(AsymptoticSolutions)\\
Input: $P(x,y)=a_{n}(x)y^{n}+\ldots+a_{0}(x)\in E_{\infty}^{\mathbb{C}}(x)[y]$
with $a_{n}\neq0$ and $a_{0}\neq0$, $m\in\mathbb{Z}_{>0},$ $\operatorname{sflag}\in\{\operatorname{true},\operatorname{false}\}$\\
Output: $((F_{1},\sigma_{1}),\ldots,(F_{t},\sigma_{t}))$ such
that
\begin{itemize}
\item for $1\leq i\leq t$\emph{,} $F_{i}=(f_{i,1},\ldots,f_{i,m_{i}})$
is an $m_{i}$-term asymptotic approximation of $\sigma_{i}$ roots
of $P(x,y)$ in $y$ (counted with multiplicities),
\item $f_{i,1},\ldots,f_{i,m_{i}}\in E_{\infty}^{\mathbb{C}}(x)$,
\item either $m_{i}=m$ or $m_{i}<m$ and $f_{i,1}+\ldots+f_{i,m_{i}}$
is an exact root of $P$ of multiplicity $\sigma_{i}$,\emph{ }
\item if $\operatorname{sflag}=\operatorname{false}$ then $\sigma_{1}+\ldots+\sigma_{t}=n$
and $F_{1},\ldots,F_{t}$ are asymptotic approximations of all complex
roots of $P(x,y)$,
\item if $\operatorname{sflag}=\operatorname{true}$ then $F_{1},\ldots,F_{t}$
are asymptotic approximations of all asymptotically small complex
roots of $P(x,y)$.\end{itemize}
\begin{enumerate}
\item If $P$ does not depend on $x$ then 

\begin{enumerate}
\item if $\operatorname{sflag}=\operatorname{true}$ return $()$,
\item let $r_{1},\ldots,r_{t}\in\mathbb{C}$ be the distinct roots of $P$,
\item for $1\leq i\leq t$\emph{,} let $\sigma_{i}$ be the multiplicity
of $r_{i}$,
\item return $(((r_{1}),\sigma_{1}),\ldots,((r_{t}),\sigma_{t}))$.
\end{enumerate}
\item Apply Algorithm \ref{alg:MrvApproxC} to $a_{0},\ldots,a_{n}$, obtaining
$\omega\in E_{\infty}(x)$,\[
b_{0},\ldots,b_{n}\in E_{\infty}^{\mathbb{C}}(x)\]
$e_{0},\ldots,e_{n}\in\mathbb{R}\cup\{\infty\}$, $d>0$, and $k\in\mathbb{Z}_{\geq0}$.
\item Let\textup{ $I=\{i\,:\,0\leq i\leq n\wedge b_{i}\neq0\}$} and \textup{$A=\{(i,e_{i})\::\: i\in I\}$}.
Compute $\gamma_{1},\ldots,\gamma_{l}$ such that the lower part of
the boundary of the convex hull of $A$ consists of segments with
slopes $-\gamma_{1},\ldots,-\gamma_{l}$. 
\item Set $\mathcal{R}=()$.
\item For $1\leq j\leq l$ do:

\begin{enumerate}
\item if $\operatorname{sflag}=\operatorname{true}$ and $\gamma_{j}\leq0$,
continue the loop with the next $j$,
\item compute \textup{$\mu=\min_{i\in I}e_{i}+\gamma_{j}i$ , $J=\{i\in I\,:\, e_{i}+\gamma_{j}i=\mu\}$,}
\item let \textup{$\lambda=\min J$ and let} $R(x,z)=\sum_{i\in J}b_{i}z^{i-\lambda}$,
\item compute\begin{align*}
 & ((F_{1},\sigma_{1}),\ldots,(F_{t},\sigma_{t}))=\\
 & AsymptoticSolutions(R(x,z),m,\operatorname{false})\end{align*}
where $F_{\iota}=(f_{\iota,1},\ldots,f_{\iota,m_{\iota}})$, for $1\leq\iota\leq t$,
\item For $1\leq\iota\leq t$ do:

\begin{enumerate}
\item put $G=(\omega^{\gamma_{j}}F)^{\downarrow k}$,
\item if $m_{\iota}=m$, set $\mathcal{R}=\mathcal{R}\sqcup((G,\sigma_{\iota}))$
and continue the loop with the next $\iota$,
\item if $m_{\iota}<m$, put $r=\sum G$, 
\item compute $P(x,r+(\omega^{\gamma_{j}})^{\downarrow k}y)=a_{r,n}(x)y^{n}+\ldots+a_{r,0}(x)$,
with $a_{r,i}\in E_{\infty}^{\mathbb{C}}(x)$, and let $\lambda=\min\{i\,:\, a_{r,i}\neq0\}$,
\item if $\lambda>0$, set $\mathcal{R}=\mathcal{R}\sqcup((G,\lambda))$,
and if $\lambda=\sigma_{\iota}$, continue the loop with the next
$\iota$,
\item put $P_{r}(x,y)=a_{r,n}(x)y^{n-\lambda}+\ldots+a_{r,\lambda}(x)$,
\item compute\begin{align*}
 & ((G_{1},\tau_{1}),\ldots,(G_{s},\tau_{s}))=\\
 & AsymptoticSolutions(P_{r}(x,y),m-m_{\iota},\operatorname{true})\end{align*}

\item for $1\leq\kappa\leq s$,
\item set\begin{align*}
 & \mathcal{R}=\mathcal{R}\sqcup\\
 & ((G\sqcup(\omega^{\gamma})^{\downarrow k}G_{1},\tau_{1}),\ldots,(G\sqcup(\omega^{\gamma})^{\downarrow k}G_{s},\tau_{s}))\end{align*}

\end{enumerate}
\end{enumerate}
\item Return $\mathcal{R}$.
\end{enumerate}
\end{algorithm}
\begin{proof}
For a proof of termination of the algorithm, put $\Sigma=\sum_{i=0}^{n}Size(a_{i})$
and let us use pairs $(m,\Sigma)\in\mathbb{Z}_{>0}\times\mathbb{Z}_{\geq0}$
as a metric. Note that $\mathbb{Z}_{>0}\times\mathbb{Z}_{\geq0}$
with the lexicographic order does not admit infinite strictly decreasing
sequences. The recursive calls to Algorithm \ref{alg:AsymptoticSolutions}
in step $5(d)$ have the same value of $m$ and a strictly lower value
of $\Sigma$, because according to the specification of Algorithm
\ref{alg:MrvApproxC}, $\sum_{i=0}^{n}Size(b_{i})<\sum_{i=0}^{n}Size(a_{i})$.
The recursive calls to Algorithm \ref{alg:AsymptoticSolutions} in
step $5(e)(vii)$ have a strictly lower value of $m$. This shows
that the algorithm terminates.

Correctness of the algorithm follows from the discussion earlier in
this section. \end{proof}
\begin{example}
Find one-term asymptotic approximations of roots of\[
P=y^{5}-\exp(x)y^{4}+x\exp(\pi x)y^{3}+\log(x)y-x^{2}\]
Applying Algorithm \ref{alg:MrvApproxC} to the coefficients of $P$
we get $\omega=\exp(-x)$,\begin{align*}
 & (b_{0},b_{1},b_{2},b_{3},b_{4},b_{5})=(-x^{2},\log(x),0,x,-1,1)\\
 & (e_{0},e_{1},e_{2},e_{3},e_{4},e_{5})=(0,0,\infty,-\pi,-1,0)\end{align*}
$d=\infty$, and $k=0$ (the coefficients and exponents can be read
off $P$ written in terms of $\omega$: $y^{5}-\omega^{-1}y^{4}+x\omega^{-\pi}y^{3}+\log(x)y-x^{2}$;
in this case $a_{i}^{\uparrow k}=b_{i}\omega^{e_{i}}$ hence $d=\infty$).
The set $A$ and the lower part of the boundary of the convex hull
of $A$ are shown in Figure \ref{fig:NewtonPolygon}. We obtain $\gamma_{1}=\frac{\pi}{3}$
and $\gamma_{2}=-\frac{\pi}{2}$.

For $\gamma_{1}$ we get $R_{1}=xy^{3}-x^{2}$. In the recursive call
to Algorithm \ref{alg:AsymptoticSolutions} applying Algorithm \ref{alg:MrvApproxC}
to the coefficients of $R_{1}$ yields $\omega_{1}=\exp(-x)$,\begin{align*}
 & (b_{1,0},b_{1,1},b_{1,2},b_{1,3})=(-1,0,0,1)\\
 & (e_{1,0},e_{1,1},e_{1,2},e_{1,3})=(-2,\infty,\infty,-1)\end{align*}
$d_{1}=\infty$, and $k_{1}=1$. The lower part of the boundary of
the convex hull of $A_{1}$ consists of one segment and we get $\gamma_{1,1}=-\frac{1}{3}$
and the corresponding polynomial $R_{1,1}=y^{3}-1$. The recursive
call to Algorithm \ref{alg:AsymptoticSolutions} returns simple roots
$1,\frac{-1-\imath\sqrt{3}}{2},\frac{-1+\imath\sqrt{3}}{2}$ of $R_{1,1}$.
We have $(\omega_{1}^{\gamma_{1,1}})^{\downarrow1}=\sqrt[3]{x}$ hence
Algorithm \ref{alg:AsymptoticSolutions} for $R_{1}$ returns\[
\sqrt[3]{x},\frac{-1-\imath\sqrt{3}}{2}\sqrt[3]{x},\frac{-1+\imath\sqrt{3}}{2}\sqrt[3]{x}\]
 all with multiplicity $1$. Since $(\omega^{\gamma_{1}})^{\downarrow0}=\exp(-\frac{\pi}{3}x)$,
we add\[
\sqrt[3]{x}\exp(-\frac{\pi}{3}x),\frac{-1-\imath\sqrt{3}}{2}\sqrt[3]{x}\exp(-\frac{\pi}{3}x),\frac{-1+\imath\sqrt{3}}{2}\sqrt[3]{x}\exp(-\frac{\pi}{3}x)\]
 to $\mathcal{R}$.

For $\gamma_{2}$ we get $R_{2}=y^{2}+x$. In the recursive call to
Algorithm \ref{alg:AsymptoticSolutions} applying Algorithm \ref{alg:MrvApproxC}
to the coefficients of $R_{2}$ yields $\omega_{2}=\exp(-x)$, $(b_{2,0},b_{2,1},b_{2,2})=(1,0,1)$,
$(e_{2,0},e_{2,1},e_{2,2})=(-1,\infty,0)$, $d_{2}=\infty$, and $k_{2}=1$.
The lower part of the boundary of the convex hull of $A_{2}$ consists
of one segment and we get $\gamma_{2,1}=-\frac{1}{2}$ and the corresponding
polynomial $R_{2,1}=y^{2}+1$. The recursive call to Algorithm \ref{alg:AsymptoticSolutions}
returns simple roots $-\imath,\imath$ of $R_{2,1}$. We have $(\omega_{2}^{\gamma_{2,1}})^{\downarrow1}=\sqrt{x}$
hence Algorithm \ref{alg:AsymptoticSolutions} for $R_{2}$ returns
$-\imath\sqrt{x},\imath\sqrt{x}$, both with multiplicity $1$. Since
$(\omega^{\gamma_{2}})^{\downarrow0}=\exp(\frac{\pi}{2}x)$, we add
$-\imath\sqrt{x}\exp(\frac{\pi}{2}x),\imath\sqrt{x}\exp(\frac{\pi}{2}x)$
to $\mathcal{R}$.

Finally, the algorithm returns one-term asymptotic approximations\begin{align*}
\mathcal{R}= & (\sqrt[3]{x}\exp(-\frac{\pi}{3}x),\frac{-1-\imath\sqrt{3}}{2}\sqrt[3]{x}\exp(-\frac{\pi}{3}x),\\
 & \frac{-1+\imath\sqrt{3}}{2}\sqrt[3]{x}\exp(-\frac{\pi}{3}x),-\imath\sqrt{x}\exp(\frac{\pi}{2}x),\imath\sqrt{x}\exp(\frac{\pi}{2}x))\end{align*}
all with multiplicity $1$.

\begin{figure}

\caption{\label{fig:NewtonPolygon}The Newton polygon of $A$}
\includegraphics[width=\columnwidth, trim = 0mm 0mm 0mm 0mm, clip]{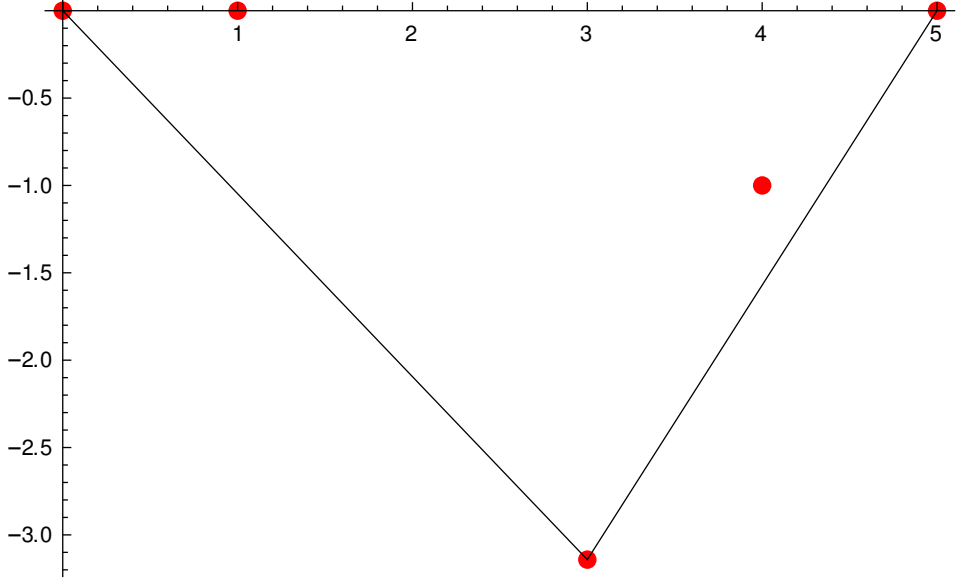}

\end{figure}

\end{example}

\section{Real roots}

Imaginary part of an non-real root may be asymptotically smaller than
the real part, hence real asymptotic approximations can correspond
to non-real roots.
\begin{example}
\label{exa:Real}Let $P(x,y)=(y^{2}-x\exp(x)y+\exp(2x))^{2}+1$. Three-term
asymptotic approximations of roots of $P$ in $y$ computed with Algorithm
\ref{alg:AsymptoticSolutions} are $(x^{-1}+x^{-3}+2x^{-5})\exp(x)$
and $(x-x^{-1}-x^{-3})\exp(x)$, both with multiplicity two. The approximations
are real-valued, yet $P$ clearly does not have real roots. Computing
more terms only adds real-valued terms of the form $ax^{-n}$ in the
coefficient of $\exp(x)$. Imaginary parts would show up only in a
transfinite series representation, since the imaginary parts are asymptotically
smaller than $x^{-n}\exp(x)$ for any $n$. For this low degree polynomial
$P$ we can compute asymptotic approximations of imaginary parts of
roots of $P$, by computing $R=res_{z}(P(x,y+z),P(x,z))$. The roots
of this polynomial of degree $16$ in $y$ with $32$ terms, are differences
of pairs of roots of $P$. In particular, four of the roots are equal
to the imaginary parts of roots of $P$, multiplied by two. One-term
asymptotic approximations of roots of $R$ computed with Algorithm
\ref{alg:AsymptoticSolutions} include two purely imaginary-valued
expressions, $-2\iota x^{-1}\exp(-x)$ and $2\iota x^{-1}\exp(-x)$,
both of multiplicity two. This shows that, indeed, the imaginary parts
of roots of $P$ are asymptotically smaller than $x^{-n}\exp(x)$
for any $n$. 
\end{example}
In this section we provide a method for deciding which asymptotic
approximations correspond to real roots. 

Let $P(x,y)=a_{n}(x)y^{n}+\ldots+a_{0}(x)\in E_{\infty}(x)[y]$ with
$a_{n}\neq0$ and $a_{0}\neq0$, and let $((F_{1},\sigma_{1}),\ldots,(F_{t},\sigma_{t}))$
be the output of Algorithm \ref{alg:AsymptoticSolutions} for $P$,
with some $m>0$ and $sflag=false$. Note that here we assume that
the coefficients of $P$ are real-valued. For $1\leq i\leq t$, $F_{i}=(f_{i,1},\ldots,f_{i,m_{i}})$.
First, let us note the easy cases. If any of $f_{i,j}$ are not real-valued,
then $F_{i}$ does not correspond to a real root. If all $f_{i,j}$
are real-valued and $\sigma_{i}=1$ then $F_{i}$ corresponds to a
real root. If $m_{i}<m$ then $f_{i,1}+\ldots+f_{i,m_{i}}$ is equal
to the exact root, hence it is evident whether the root is real valued.

The hard case is when there are real-valued asymptotic approximations
with $\sigma_{i}>1$ and $m_{i}=m$. We will find the number of distinct
real roots corresponding to each real-valued asymptotic approximation
with multiplicity higher than one. Assume that, possibly after reordering,
the real-valued asymptotic approximations are $F_{i}$, for $1\leq i\leq s$.
Let $r_{i}=f_{i,1}+\ldots+f_{i,m_{i}}$, for $1\leq i\leq s$. The
algorithm \emph{MrvLimit} of \cite{G} contains a subprocedure which
computes the sign of exp-log expressions near infinity. Hence, we
can reorder the approximations so that, for sufficiently large $x$,
$r_{1}<\ldots<r_{s}$. Put $h_{0}=-\infty$, $h_{i}=\frac{r_{i}+r_{i+1}}{2}$,
for $1\leq i<s$, and $h_{s}=\infty$.
\begin{lem}
\label{lem:inequality}If $F_{i}$ corresponds to a real root $\alpha$
of $P$ then, for sufficiently large $x$, $h_{i-1}<\alpha<h_{i}$. 
\end{lem}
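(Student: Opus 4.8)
The plan is to reduce the lemma to a single asymptotic fact: that $\alpha-r_i$ is negligible compared with the gaps $r_i-r_{i-1}$ and $r_{i+1}-r_i$. Granting this, recall that $r_{i-1}<r_i<r_{i+1}$ for sufficiently large $x$, so $h_{i-1}=r_i-\tfrac12(r_i-r_{i-1})$ and $h_i=r_i+\tfrac12(r_{i+1}-r_i)$; once $\lvert\alpha-r_i\rvert<\tfrac12(r_i-r_{i-1})$ and $\lvert\alpha-r_i\rvert<\tfrac12(r_{i+1}-r_i)$ hold we get $h_{i-1}<\alpha<h_i$. Two boundary cases are immediate: if $i=1$ then $h_0=-\infty<\alpha$, and if $i=s$ then $\alpha<h_s=\infty$; and if $m_i<m$ then $r_i=f_{i,1}+\dots+f_{i,m_i}$ is an exact root of $P$, so $\alpha=r_i$ and the inequalities are clear. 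Hence I may assume $m_i=m$ and, by symmetry, treat only the upper gap $r_{i+1}-r_i$.

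On the error side, since $F_i$ is an $m$-term asymptotic approximation of $\alpha$ we have $\alpha-r_i=o(f_{i,m})$; and $\lim_{x\to\infty}f_{i,j+1}/f_{i,j}=0$ gives $f_{i,m}=o(f_{i,j})$ for every $j<m$, so $\alpha-r_i$ is asymptotically smaller than every term of $F_i$.

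On the gap side, which is the crux, let $l$ be the least index with $f_{i+1,l}\ne f_{i,l}$; it exists because $F_i\ne F_{i+1}$, and $l\le m$. Then
\[
r_{i+1}-r_i=(f_{i+1,l}-f_{i,l})+\sum_{j>l}(f_{i+1,j}-f_{i,j}),
\]
and every summand on the right with $j>l$ is $o(f_{i,l})+o(f_{i+1,l})$. Each term produced by Algorithm \ref{alg:AsymptoticSolutions} is the product of a nonzero constant with real powers of the expressions $\omega$ introduced by the recursive calls to Algorithm \ref{alg:MrvApproxC}, and, after accounting for the substitutions $x\mapsto\log^k(x)$, these $\omega$ lie in pairwise distinct comparability classes, all tending to $0$. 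Consequently two distinct such terms either share the same multidegree in the $\omega$'s, in which case they differ by a constant factor and lie in a common comparability class, or have different multidegrees, in which case one strictly dominates the other. In either case $f_{i+1,l}-f_{i,l}$ lies in the comparability class of $\max(\lvert f_{i,l}\rvert,\lvert f_{i+1,l}\rvert)$ and dominates every $f_{i,j}$ and $f_{i+1,j}$ with $j\ge l$, so $r_{i+1}-r_i$ lies in that comparability class and $f_{i,m}/(r_{i+1}-r_i)$ is bounded as $x\to\infty$. Combining the two estimates, $\frac{\alpha-r_i}{r_{i+1}-r_i}=\frac{\alpha-r_i}{f_{i,m}}\cdot\frac{f_{i,m}}{r_{i+1}-r_i}\to0$, hence $\alpha-r_i=o(r_{i+1}-r_i)$, and the analogous computation with $r_{i-1}$ gives $\alpha-r_i=o(r_i-r_{i-1})$; the two midpoint inequalities follow.

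The step I expect to be the main obstacle is the no-cancellation assertion used on the gap side: that the first differing terms $f_{i,l}$ and $f_{i+1,l}$ of two distinct computed approximations cannot accidentally be asymptotically equal, so $r_{i+1}-r_i$ cannot collapse below the comparability class of $f_{i,l}$. Making this precise amounts to proving that the $\omega$'s accumulated along the recursion form a comparability-independent family after the $x\mapsto\log^k(x)$ substitutions, so that no nontrivial monomial in them is bounded away from $0$ and $\infty$, and to following Algorithm \ref{alg:AsymptoticSolutions} closely enough to see that two distinct output branches separate either through different slopes $\gamma_j$ (hence already in the first term) or, at the base case, through distinct constant roots. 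I would isolate this as a lemma about the output of Algorithm \ref{alg:AsymptoticSolutions}, proved by induction on the pair $(m,\Sigma)$ used for the termination argument.
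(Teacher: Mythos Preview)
Your reduction is correct and coincides with the paper's: both arguments hinge on a no-cancellation claim about the first differing terms $f_{i,l}$ and $f_{i\pm1,l}$ of two distinct computed approximations, namely that $\lim_{x\to\infty} f_{i,l}/f_{i\pm1,l}\ne 1$. The paper isolates exactly this as Claim~\ref{claim}, and once it is available computes $\lim(\alpha-h_{i-1})/g$ directly for $g\in\{f_{i,l},f_{i-1,l}\}$; your route through $\alpha-r_i=o(f_{i,m})$ together with $r_{i+1}-r_i\sim f_{i+1,l}-f_{i,l}$ is an equivalent repackaging. (One small omission: when $m_{i+1}<m_i$ and $F_{i+1}$ is a prefix of $F_i$ you must pad with a zero term, as the paper does, before defining $l$.)

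The gap is in your proposed proof of the no-cancellation claim. Your structural assertion---that each output term is a nonzero constant times a monomial in $\omega$'s lying in pairwise distinct comparability classes---breaks down at step~5(e)(vii). There the algorithm recurses on $P_r(x,y)$, whose coefficients $a_{r,i}$ already involve $r$ and $(\omega^{\gamma_j})^{\downarrow k}$; the $\omega'$ chosen by the subsequent call to \textit{MrvApproxC} need not lie in a strictly lower comparability class than $\omega$, and in simple examples (e.g.\ $P=y^2-(e^x+1)$) one gets $\omega'=\omega$. More generally nothing rules out $\omega'\asymp\omega$ with $\omega'/\omega^{r}\to 1$ for some real $r$, in which case distinct monomials in $\omega,\omega'$ can have ratio tending to $1$. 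Your branching dichotomy (``different slopes $\gamma_j$, or distinct constant roots at the base'') also omits the separation that occurs inside the $P_r$ recursion. The paper sidesteps all of this by proving Claim~\ref{claim} directly by induction on the recursion of Algorithm~\ref{alg:AsymptoticSolutions}: different $\gamma$'s give ratio $0$ or $\infty$ because the remaining factors satisfy $mrv(c_f,c_g)\prec\omega$; same $\gamma$ but different $\iota$ reduces to the inductive hypothesis for $R$; same $\iota$ reduces to the inductive hypothesis for $P_r$. The induction on $(m,\Sigma)$ you suggest is the right frame, but the inductive statement to carry through should be the ratio claim itself rather than the monomial/independence description.
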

To prove the lemma we will use the following claim.
\begin{claim}
\label{claim}If $F=(f_{1},\ldots,f_{m_{f}})$ and $G=(g_{1},\ldots,g_{m_{g}})$
are asymptotic approximations returned by Algorithm \ref{alg:AsymptoticSolutions}
for $P$ and there is $l\leq\min(m_{f},m_{g})$ such that, for all
$1\leq j<l$, $f_{j}=g_{j}$ and $f_{l}\neq g_{l}$, then $\lim_{x\rightarrow\infty}\frac{f_{l}}{g_{l}}\neq1$.\end{claim}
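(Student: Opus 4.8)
The plan is to trace where the two approximations $F$ and $G$ first diverge back through the recursion of Algorithm \ref{alg:AsymptoticSolutions} and show that the divergence is forced by a genuine separation of roots of some polynomial $R_\gamma$, which cannot produce coefficients whose ratio tends to $1$. First I would set up the induction on the recursion depth, equivalently on the metric $(m,\Sigma)$ used in the termination proof. At the top level, both $F$ and $G$ are built as $(\omega^{\gamma}F')^{\downarrow k}$ for approximations $F'$ of roots of some $R_{\gamma_j}$; the common prefix $f_1=g_1,\ldots,f_{l-1}=g_{l-1}$ forces the same choice of $\gamma_j$ (since $f_1$ determines $\omega^{\gamma_j}$ up to the lower-order factor, and by the sign/comparability machinery of \cite{G} distinct $\gamma_j$ give terms in different comparability classes), and hence the same $R_{\gamma_j}$ and the same $\omega^{\gamma_j}$, $k$. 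Dividing out the common factor $(\omega^{\gamma_j})^{\downarrow k}$ reduces the claim to the corresponding statement for $F'$ and $G'$, which are outputs of a recursive call with strictly smaller metric — so the induction hypothesis applies, unless $l=1$.

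The base case $l=1$ (i.e. $f_1\ne g_1$ already) is where the real content sits. Here $f_1$ and $g_1$ are, after stripping the outer $\omega^\gamma$ factors accumulated down the recursion, leading terms of approximations of two roots $\rho,\rho'$ of the same polynomial $R_\gamma(x,z)=\sum_{i\in J}b_i z^{i-\lambda}$. I would argue: if $\lim_{x\to\infty} f_1/g_1 = 1$, then $\rho$ and $\rho'$ have the same leading term, so $\rho - \rho'$ is in a lower comparability class than $\rho$; but then in the recursive call that produced $F$ and $G$, both $\rho$ and $\rho'$ would have arisen from the \emph{same} root of the next-level Newton-polygon polynomial (since that polynomial's roots are the leading coefficients $\beta$, and $\rho,\rho'$ share theirs), and the algorithm would have pursued them through a common sub-call, forcing $f_1=g_1$ — contradiction. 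Equivalently, and more cleanly: $R_\gamma$ has finitely many roots in $K[\imath]$, and the algorithm's recursion assigns to each a \emph{distinct} leading term at each level; two approximations that agree in no term cannot have ratio $\to 1$ in their first terms because that would make them approximate the same root.

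The main obstacle is handling the branch in step $5(e)$ where an exact sub-root $r=\sum G'$ is found with $m_{\iota}<m$ and the algorithm then recurses on the shifted polynomial $P_r$: there the later terms of $F$ are $\omega^\gamma$ times approximations of \emph{small} roots of a translated polynomial, and I must check that the common-prefix-then-divergence structure survives translation. The key point is that the prefix equality $f_j=g_j$ for $j<l$ forces the same exact partial root $r$ and the same shift, so again we are comparing outputs of a single recursive call on $P_r$ with strictly smaller $m$, and the induction hypothesis closes the case. I expect the write-up to require care in stating precisely that "the first term of an approximation determines which segment $\gamma_j$ and which leading coefficient $\beta$ it came from" — this is where the comparability-class results (Theorem following the definition of $E_\infty(x)$, and the $mrv$ formalism of \cite{G}) must be invoked to rule out accidental coincidences across different $\gamma_j$ or different roots of $R_\gamma$.
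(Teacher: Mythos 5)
Your proposal follows essentially the same strategy as the paper's proof: induct on the recursion of Algorithm \ref{alg:AsymptoticSolutions}, split cases by whether $F$ and $G$ first diverge at a different Newton-polygon slope $\gamma$, within the same call on $R_\gamma$, or within the same call on $P_r$. You correctly identify the key fact that distinct slopes produce leading terms in different comparability classes, so the ratio tends to $0$ or $\infty$ rather than $1$, and you correctly observe that a nonempty common prefix forces the same $\gamma$ and hence reduces the claim, after dividing out $(\omega^{\gamma})^{\downarrow k}$, to the recursive call on $R_\gamma$ (or on $P_r$ in the exact-sub-root branch), where the induction hypothesis applies. This is precisely the paper's three-case decomposition.

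The one place your write-up is looser than it should be is the $l=1$ case with the same $\gamma$. You describe it as ``where the real content sits'' and offer a separate contradiction argument, that $\rho$ and $\rho'$ would share a leading coefficient and hence ``arise from the same root of the next-level Newton-polygon polynomial,'' concluding the algorithm would have output $f_1 = g_1$. As phrased this is circular: the assertion that two distinct roots cannot produce first terms with ratio tending to $1$ is exactly the claim at the next recursion level, not an independent fact. The clean move, and what the paper does, is to note that once $\gamma_f = \gamma_g$ (the only case left after the comparability-class argument), $f_1$ and $g_1$ are $(\omega^{\gamma})^{\downarrow k}$ times the first terms of two approximations returned by the recursive call on $R_\gamma$, so the claim for $l=1$ is literally the induction hypothesis applied to $R_\gamma$ with $l'=1$; no further argument is needed. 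Apart from that unnecessary detour, the case analysis and the handling of the step $5(e)$ exact-root branch match the paper's proof.
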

\begin{proof}
The claim is true when $P$ does not depend on $x$, hence, by induction,
we may assume that the claim is true for the recursive calls to Algorithm
\ref{alg:AsymptoticSolutions}. If $F$ and $G$ were computed in
different iterations of the loop in step $5$ then $l=1$, $f_{1}=(\omega^{\gamma_{f}}c_{f})^{\downarrow k}$,
$g_{1}=(\omega^{\gamma_{g}}c_{g})^{\downarrow k}$, $mrv(c_{f},c_{g})\prec\omega$,
and $\gamma_{f}\neq\gamma_{g}$, hence $\lim_{x\rightarrow\infty}\frac{\lvert f_{l}\rvert}{\lvert g_{l}\rvert}$
is either $0$ or $\infty$. If $F$ and $G$ were computed in the
same iteration of the loop in step $5$, but in different iterations
of the loop in step $5(e)$ then the claim is true by the inductive
hypothesis applied to $R(x,z)$. Finally, if $F$ and $G$ were computed
in the same iteration of the loop in step $5(e)$ then $l>m_{\iota}$,
and hence neither $F$ nor $G$ was added in step $5(e)(v)$. Therefore,
the claim is true by the inductive hypothesis applied to $P_{r}(x,y)$.
\end{proof}
Let us now prove Lemma \ref{lem:inequality}. 
\begin{proof}
If $m_{i}<m$ then $\alpha=r_{i}$ and $h_{i-1}<r_{i}<h_{i}$, because,
for sufficiently large $x$, $r_{1}<\ldots<r_{s}$. Hence we can assume
that $m_{i}=m$. Let us prove that $h_{i-1}<\alpha$. If $i=1$, then
$h_{i-1}=-\infty$ and the inequality is true. Let $l$ be such that
for all $1\leq j<l$ $f_{i-1,j}=f_{i,j}$ and $f_{i-1,l}\neq f_{i,l}$.
We define $f_{i-1,m_{i-1}+1}=0$, so that such $l$ always exist.
Note that, for sufficiently large $x$, $f_{i-1,l}<f_{i,l}$, because
$r_{i-1}<r_{i}$. If $\lim_{x\rightarrow\infty}\frac{\lvert f_{i-1,l}\rvert}{\lvert f_{i,l}\rvert}\leq1$
put $g=f_{i,l}$ else put $g=f_{i-1,l}$. We have\[
\frac{\alpha-h_{i-1}}{g}=\frac{\alpha-\sum_{j=1}^{m}f_{i,j}}{f_{i,m}}\frac{f_{i,m}}{g}+\frac{f_{i,l}-f_{i-1,l}}{2g}+\frac{\sum_{j=l+1}^{m}f_{i,j}}{2g}-\frac{\sum_{j=l+1}^{m_{i-1}}f_{i-1,j}}{2g}\]
Since $F_{i}$ is an asymptotic approximation of $\alpha$, we have
\[
\lim_{x\rightarrow\infty}\frac{\alpha-\sum_{j=1}^{m}f_{i,j}}{f_{i,m}}=0\]
and, for $j>l$, we have $\lim_{x\rightarrow\infty}\frac{f_{i,j}}{g}=0$
and $\lim_{x\rightarrow\infty}\frac{f_{i-1,j}}{g}=0$, therefore\[
\lim_{x\rightarrow\infty}\frac{\alpha-h_{i-1}}{g}=\lim_{x\rightarrow\infty}\frac{f_{i,l}-f_{i-1,l}}{2g}\]

If $f_{i-1,l}=0$, then $g=f_{i,l}$, for sufficiently large $x$,
$f_{i,l}>0$, and $\lim_{x\rightarrow\infty}\frac{f_{i,l}-f_{i-1,l}}{2g}=\frac{1}{2}$,
hence for sufficiently large $x$, $\alpha-h_{i-1}>0$. 

If $f_{i-1,l}\neq0$, then the assumptions of Claim \ref{claim} are
satisfied, and hence $\lim_{x\rightarrow\infty}\frac{f_{i-1,l}}{f_{i,l}}\neq1$. 

Suppose that $g=f_{i,l}$. Then\[
\lim_{x\rightarrow\infty}\frac{f_{i,l}-f_{i-1,l}}{2g}=\frac{1}{2}-\frac{1}{2}\lim_{x\rightarrow\infty}\frac{f_{i-1,l}}{f_{i,l}}\neq0\]
Since for sufficiently large $x$, $f_{i,l}-f_{i-1,l}>0$, $\lim_{x\rightarrow\infty}\frac{\lvert f_{i-1,l}\rvert}{\lvert f_{i,l}\rvert}\leq1$,
and $\lim_{x\rightarrow\infty}\frac{f_{i-1,l}}{f_{i,l}}\neq1$, hence,
for sufficiently large $x$, $f_{i,l}>0$. Therefore,\[
\lim_{x\rightarrow\infty}\frac{\alpha-h_{i-1}}{f_{i,l}}=\lim_{x\rightarrow\infty}\frac{f_{i,l}-f_{i-1,l}}{2f_{i,l}}>0\]
 which shows that, for sufficiently large $x$, $h_{i-1}<\alpha$. 

Now suppose that $g=f_{i-1,l}$. Then \[
\lim_{x\rightarrow\infty}\frac{f_{i,l}-f_{i-1,l}}{2g}=\frac{1}{2}\lim_{x\rightarrow\infty}\frac{f_{i,l}}{f_{i-1,l}}-\frac{1}{2}\neq0\]
Since for sufficiently large $x$, $f_{i,l}-f_{i-1,l}>0$, and $\lim_{x\rightarrow\infty}\frac{\lvert f_{i-1,l}\rvert}{\lvert f_{i,l}\rvert}>1$,
hence, for sufficiently large $x$, $f_{i-1,l}<0$. Therefore, \[
\lim_{x\rightarrow\infty}\frac{\alpha-h_{i-1}}{f_{i-1,l}}=\lim_{x\rightarrow\infty}\frac{f_{i,l}-f_{i-1,l}}{2f_{i-1,l}}<0\]
which shows that, for sufficiently large $x$, $h_{i-1}<\alpha$.
The proof that, for sufficiently large $x$, $\alpha<h_{i}$ is similar.
\end{proof}
Let $P_{1},\ldots,P_{k}$ be the Sturm sequence of $P$ in $y$ over
$E_{\infty}(x)$ (that is coefficients that are identically zero near
infinity are set to zero). For $1\leq i\leq s-1$ and $1\leq j\leq k$,
$P_{j}(x,h_{i})\in E_{\infty}(x)$, hence it has a constant sign $\theta_{i,j}$
near infinity, and we can compute $\theta_{i,j}$ using a subprocedure
of \emph{MrvLimit.} Let $c_{j}y^{n_{j}}$ be the leading term of $P_{j}$,
for $1\leq j\leq k$, let $\theta_{0,j}$ be the sign near infinity
of $(-1)^{n_{j}}c_{j}$, and let $\theta_{s,j}$ be the sign near
infinity of $c_{j}$. For $1\leq i\leq s-1$, let $\nu_{i}$ be the
number of sign changes in the sequence $\Theta_{i}=(\theta_{i,1},\ldots,\theta_{i,k})$.
\begin{criterion}
The number of distinct real roots of $P$ corresponding to the asymptotic
approximation $F_{i}$ is equal to $\nu_{i-1}-\nu_{i}$.
\end{criterion}
Correctness of the criterion follows from Lemma \ref{lem:inequality}
and Sturm's theorem.
\begin{example}
As in Example \ref{exa:Real}, let $P(x,y)=(y^{2}-x\exp(x)y+\exp(2x))^{2}+1$.
One-term asymptotic approximations of roots of $P$ in $y$ computed
with Algorithm \ref{alg:AsymptoticSolutions} are $r_{1}=x^{-1}\exp(x)$
and $r_{2}=x\exp(x)$, both with multiplicity two. The Sturm sequence
of $P$ in $y$ is\begin{eqnarray*}
P_{1} & = & y^{4}-2x\exp(x)y^{3}+(x^{2}+2)\exp(2x)y^{2}-2x\exp(3x)y+\\
 &  & \exp(4x)+1\\
P_{2} & = & 4y^{3}-6x\exp(x)y^{2}+2(x^{2}+2)\exp(2x)y-2x\exp(3x)\\
P_{3} & = & \frac{1}{4}(x^{2}-4)\exp(2x)y^{2}-\frac{1}{4}(x^{3}-4x)\exp(3x)y+\\
 &  & \frac{1}{4}(x^{2}-4)\exp(4x)-1\\
P_{4} & = & -\frac{16}{(x^{2}-4)\exp(2x)}y+\frac{8x}{(x^{2}-4)\exp(x)}\\
P_{5} & = & \frac{1}{16}(x^{4}-8x^{2}+16)\exp(4x)+1\end{eqnarray*}
We have $\Theta_{0}=(1,-1,1,1,1)$ and $\Theta_{2}=(1,1,1,-1,1)$.
Since $\nu_{0}=\nu_{2}=2$, $P$ has no real roots near infinity (and
we do not need to compute $\nu_{1}$, since it must equal $2$ as
well). 

Let $Q(x,y)=(y^{2}-x\exp(x)y+\exp(2x))^{2}-1$. One-term asymptotic
approximations of roots of $Q$ in $y$ computed with Algorithm \ref{alg:AsymptoticSolutions}
are the same as for $P$. The Sturm sequence of $Q$ in $y$ is\begin{eqnarray*}
Q_{1} & = & y^{4}-2x\exp(x)y^{3}+(x^{2}+2)\exp(2x)y^{2}-2x\exp(3x)y+\\
 &  & \exp(4x)-1\\
Q_{2} & = & 4y^{3}-6x\exp(x)y^{2}+2(x^{2}+2)\exp(2x)y-2x\exp(3x)\\
Q_{3} & = & \frac{1}{4}(x^{2}-4)\exp(2x)y^{2}-\frac{1}{4}(x^{3}-4x)\exp(3x)y+\\
 &  & \frac{1}{4}(x^{2}-4)\exp(4x)+1\\
Q_{4} & = & \frac{16}{(x^{2}-4)\exp(2x)}y-\frac{8x}{(x^{2}-4)\exp(x)}\\
Q_{5} & = & \frac{1}{16}(x^{4}-8x^{2}+16)\exp(4x)-1\end{eqnarray*}
We have $\Theta_{0}=(1,-1,1,-1,1)$ and $\Theta_{2}=(1,1,1,1,1)$.
Since $\nu_{0}=4$ and $\nu_{2}=0$, $P$ has four distinct real roots
near infinity. This again is sufficient to tell that $r_{1}$ or $r_{2}$
correspond to two real root each. And indeed, if we substitute $h_{1}=(r_{1}+r_{2})/2=(x^{-1}+x)\exp(x)$
into the Sturm sequence and compute the signs near infinity we get
$\Theta_{1}=(1,-1,-1,1,1)$ and $\nu_{1}=2$.
\end{example}

\section{Implementation and experimental results}

We have implemented \emph{AsymptoticSolutions} as a part of the \emph{Mathematica}
system. The implementation has been done in Wolfram Language, using
elements of the \emph{MrvLimit} algorithm, which is implemented partly
in the C source code of \emph{Mathematica} and partly in Wolfram Language.
The experiments have been run on a laptop computer with a $2.7$ GHz
Intel Core i7-4800MQ processor and $10$ GB of RAM assigned to the
Linux virtual machine.
\begin{example}
\label{exa:Experiment}We use eight exp-log expressions from examples
in \cite{G} as polynomial coefficients.\begin{eqnarray*}
a_{0} & = & e^{x}(e^{1/x-e^{-x}}-e^{1/x})\\
a_{1} & = & e^{\frac{e^{x-e^{-x}}}{1-1/x}}-e^{e^{x}}\\
a_{2} & = & \frac{e^{e^{e^{x+e^{-x}}}}}{e^{e^{e^{x}}}}\\
a_{3} & = & \frac{e^{e^{e^{x}}}}{e^{e^{e^{x-e^{-e^{x}}}}}}\\
a_{4} & = & (3^{x}+5^{x})^{1/x}\\
a_{5} & = & \frac{x}{\log(x^{\log(x)^{\log(2)/\log(x)}})}\\
a_{6} & = & \frac{\exp(4xe^{-x}/(e^{-x}+e^{\frac{-2x^{2}}{x+1}}))-e^{x}}{e^{4x}}\\
a_{7} & = & \frac{\exp(\frac{xe^{-x}}{e^{-x}+e^{-2x^{2}/(x+1)}})}{e^{x}}\end{eqnarray*}
Let $P_{n}(x,y)=\sum_{i=0}^{n}a_{i}y^{i}$. We have run the examples
with $n$ ranging from $2$ to $7$ and with varying number $m$ of
requested terms. The results are given in Table \ref{tab:Experiment}.
For each value of $m$ the row \emph{Time} gives the computation time
in seconds, the row \emph{Iter} gives the number of calls to Algorithm
\ref{alg:AsymptoticSolutions}, and the row \emph{LC} gives the total
leaf count of the returned expressions.

We can observe that for a fixed polynomial the number of recursive
calls is close to linear in the number of additional terms requested.
Increasing the degree did not necessarily lead to higher complexity,
e.g. adding the degree $6$ term made the computation easier. A likely
cause for this is that the dominating terms in the degree $6$ polynomial
were simpler than those in the degree $5$ polynomial.
\end{example}

\vfill
\break

\begin{table}
\caption{\label{tab:Experiment}Example \ref{exa:Experiment}.}

\begin{tabular}{|c|c||c|c|c|c|c|c|}
\hline 
$m$ & $n$ & $2$ & $3$ & $4$ & $5$ & $6$ & $7$\tabularnewline
\hline
\hline 
 & Time & $0.171$ & $0.235$ & $0.292$ & $0.276$ & $0.389$ & $0.448$\tabularnewline
\cline{2-8} 
$1$ & Iter & $4$ & $6$ & $6$ & $7$ & $6$ & $6$\tabularnewline
\cline{2-8} 
 & LC & $56$ & $84$ & $124$ & $195$ & $192$ & $323$\tabularnewline
\hline 
 & Time & $0.231$ & $0.508$ & $1.18$ & $1.20$ & $1.02$ & $1.87$\tabularnewline
\cline{2-8} 
$5$ & Iter & $20$ & $32$ & $44$ & $59$ & $58$ & $67$\tabularnewline
\cline{2-8} 
 & LC & $136$ & $229$ & $498$ & $668$ & $588$ & $1933$\tabularnewline
\hline 
 & Time & $0.376$ & $0.796$ & $3.53$ & $5.25$ & $1.64$ & $5.18$\tabularnewline
\cline{2-8} 
$10$ & Iter & $40$ & $63$ & $86$ & $124$ & $118$ & $137$\tabularnewline
\cline{2-8} 
 & LC & $256$ & $433$ & $896$ & $1303$ & $1188$ & $4008$\tabularnewline
\hline 
 & Time & $0.968$ & $3.21$ & $26.5$ & $25.4$ & $5.47$ & $20.1$\tabularnewline
\cline{2-8} 
$20$ & Iter & $80$ & $124$ & $170$ & $254$ & $238$ & $277$\tabularnewline
\cline{2-8} 
 & LC & $496$ & $837$ & $1716$ & $2573$ & $2388$ & $8158$\tabularnewline
\hline
\end{tabular}%
\end{table}

\bibliographystyle{abbrv}
%\bibliography{AsymptoticSolutions}

\begin{thebibliography}{10}

\bibitem{B}
N.~Bourbaki.
\newblock {\em Elements de Mathematique; Fonctions d'une variable reele}.
\newblock Hermann, 1961.

\bibitem{G}
D.~Gruntz.
\newblock {\em On computing limits in a symbolic manipulation system}.
\newblock PhD thesis, ETH, 1996.

\bibitem{H2}
G.~H. Hardy.
\newblock {\em Orders of Infinity}, volume~12.
\newblock Cambridge Tracts in Mathematics and Mathematical Physics, 1910.

\bibitem{H1}
G.~H. Hardy.
\newblock Properties of logarithmico-exponential functions.
\newblock {\em Proc. London Math. Soc.}, 10:54--90, 1911.

\bibitem{M}
M.~Mignotte.
\newblock Some useful bounds.
\newblock In {\em Computer Algebra and Symbolic Computation}, pages 259--263.
  Springer-Verlag, 1982.

\bibitem{N}
I.~Newton.
\newblock {\em The method of fluxions and infinite series; with its application
  to the geometry of curve-lines}.
\newblock Henry Woodfall, 1736.

\bibitem{R3}
D.~Richardson.
\newblock How to recognize zero.
\newblock {\em J. Symb. Comput.}, 24:627--645, 1997.

\bibitem{R5}
A.~Robinson.
\newblock On the real closure of a hardy field.
\newblock In {\em Theory of Sets and Topology}, pages 427--433. Deutsch. Verlag
  Wissenschaften, 1972.

\bibitem{R6}
M.~Rosenlicht.
\newblock Hardy fields.
\newblock {\em Journal of Mathematical Analysis and Applications}, 93:297--311,
  1983.

\bibitem{Sh1}
J.~Shackell.
\newblock {\em Symbolic Asymptotics}.
\newblock Springer, 2004.

\bibitem{S21}
A.~Strzebo\'nski.
\newblock Real root isolation for exp-log functions.
\newblock In D.~Jeffrey, editor, {\em Proceedings of the International
  Symposium on Symbolic and Algebraic Computation, ISSAC 2008}, pages 303--313.
  ACM, 2008.

\bibitem{S22}
A.~Strzebo\'nski.
\newblock Real root isolation for exp-log-arctan functions.
\newblock {\em J. Symb. Comput.}, 47:282--314, 2012.

\bibitem{H3}
J.~van~der Hoeven.
\newblock {\em Transseries and Real Differential Algebra}.
\newblock Springer, 2006.

\bibitem{W1}
R.~J. Walker.
\newblock {\em Algebraic Curves}.
\newblock Prineceton University Press, 1950.

\bibitem{Z}
M.~Zedek.
\newblock Continuity and location of zeros of linear combinations of
  polynomials.
\newblock {\em Proc. Amer. Math. Soc.}, 16:78--84, 1965.

\end{thebibliography}

\end{document}